\documentclass[12pt]{article}
\usepackage[margin=1in,letterpaper]{geometry}

\usepackage{amsmath,amsthm}  
\usepackage{newtxtext,newtxmath}
\usepackage{setspace}
\usepackage{float}
\usepackage{microtype}  

\usepackage{natbib}
\usepackage{graphicx}
\usepackage{subcaption}
\usepackage{booktabs}  

\usepackage{enumitem}
\setlist{nosep}  

\usepackage{xcolor}
\usepackage{hyperref}
\hypersetup{
    colorlinks=true,
    linkcolor=black,         
    citecolor=blue!50!black, 
    urlcolor=blue!50!black
}

\theoremstyle{plain}
\newtheorem{theorem}{Theorem}
\newtheorem{lemma}{Lemma}
\newtheorem{proposition}{Proposition}
\newtheorem{corollary}{Corollary}
\newtheorem{definition}{Definition}
\newtheorem{axiom}{Axiom}
\newtheorem{example}{Example}

\theoremstyle{remark}
\newtheorem{remark}{Remark}

\newcommand{\X}{\mathcal{X}}
\newcommand{\blkdiag}{\operatorname{blkdiag}}
\newcommand{\diag}{\operatorname{diag}}

\usepackage{fancyhdr}
\renewcommand{\headrulewidth}{0pt}
\begin{document}

\title{Aspiration-Weighted Influence}
\author{%
  {\large Siming Ye}\footnote{%
    Department of Economics, \emph{Georgetown University}. 
    Email: \texttt{sy677@georgetown.edu}. 
    I am deeply grateful to my advisor, Christopher Chambers, for his guidance and support. I thank Yusufcan Masatlioglu, Federico Echenique, John Rust, Garance Genicot, Christopher Turansick, Axel Anderson, Frank Yang, Luca Anderlini, Zhengxing Huang, Tri Phu Vu, Emilio Muniz Langle, and seminar participants at Shanghai University of Finance and Economics, for their helpful comments. Earlier versions of this paper were presented at the 2025 Midwest Economics Association Annual Meeting and a 2025 PhD workshop at Queen Mary University of London. I thank the participants at these events for their valuable feedback. All errors are my own.
  }%
}
\date{\today}
\maketitle
\begin{abstract} 
We study directed social influence when an influencer chooses from a richer menu than a constrained follower (decision maker). The decision maker (DM) selects from a feasible set, while the influencer displays a distribution over a superset that includes infeasible alternatives. We propose the Aspiration-Weighted Luce Model (AWLM): the DM forms a convex combination of her idiosyncratic Luce preferences within the feasible set and the influencer's distribution, then renormalizes this attempt target onto the feasible set. This renormalization generates an \emph{aspirational dampening} effect: holding the influencer's within-feasible composition fixed and shifting exposure toward infeasible alternatives attenuates influence on feasible choices. We provide an axiomatic characterization based on (i) proportional responses to shifts in feasible exposure and (ii) a unit-slope leverage restriction across different levels of feasible share. The model allows for point identification of influence strength and idiosyncratic preferences from two exposure regimes, yielding testable overidentifying restrictions for empirical application. 
\end{abstract}
\pagestyle{fancy}
\fancyhf{} 
\renewcommand{\headrulewidth}{0pt} 
\fancyfoot[C]{\thepage}
\pagestyle{fancy} 
\thispagestyle{fancy}

\section{Introduction}
How does social influence operate when the influencer chooses from a richer menu than the follower? Celebrities, experts, and wealthier peers routinely display behavior over alternatives that are infeasible for aspiring observers. An amateur photographer follows professionals whose equipment exceeds her budget; a home cook watches chefs use ingredients unavailable locally; a teenager tracks a blogger's high-street fashion choices she cannot replicate. In each case, the follower ultimately chooses from a feasible set $S$, while the observed behavior is generated on a strict superset $I \supset S$. This paper asks how exposure defined partly on $I\setminus S$ reshapes choice over $S$.

The question bears several levels of economic meaning. Aspirational marketing is a deliberate strategy: brands cultivate images centered on products that most consumers cannot afford, hoping this will influence purchases of entry-level items. Whether this strategy succeeds depends on how exposure to unattainable goods transmits to feasible consumption. The rapid expansion of social media has dramatically increased exposure to consumption patterns of individuals with different opportunity sets. Understanding how this exposure operates is also essential for analyzing modern consumer behavior. Additionally, from a revealed-preference perspective, influence that depends on infeasible alternatives elicits a conceptual challenge: the standard assumption is that choices reveal preferences over the available set, but aspirational influence means that behavior outside this set also matters. 

Theoretically, this setting introduces a new source of variation to the decision-theoretic apparatus. In standard models (e.g., Random Utility, Luce), choice probabilities change only when the \emph{feasible set} changes or when the underlying preferences shift structurally. Here, we identify a channel where choice probabilities vary due to changes in the \emph{environment} (the influencer's behavior on the infeasible set), while the decision maker's feasible set and idiosyncratic preferences remain fixed. This allows us to exploit \text{variation in the unattainable} to identify structural parameters governing the attainable. This is distinct from standard peer effects, where the influencer and follower are typically assumed to face identical choice sets.

We address these questions by developing the Aspiration-Weighted Luce Model (AWLM) within a stochastic choice framework, capturing influence as a shift in consumption propensities rather than deterministic constraints. The DM has idiosyncratic preferences satisfying Luce's choice axiom, where choice probabilities are proportional to underlying utilities. Under exposure to an influencer whose demonstrated behavior spans a larger menu, these idiosyncratic probabilities are mixed with the influencer's distribution and then renormalized over the DM's feasible set. The mixing weight captures the DM's susceptibility to influence; the renormalization captures the constraint that she must ultimately choose from what is available. This mixing-then-renormalizing structure is the core of the model, and the renormalization step is where the economic content resides.

To understand the mechanism, it is useful to decompose the influencer's behavior into two components: her \emph{composition} over feasible alternatives, $q(\cdot\mid S)$, and her \emph{feasible share}, $q_S$ (the mass devoted to alternatives available to the DM). In the AWLM, changes in the influencer's feasible composition translate into proportional shifts in the DM’s choices. Crucially, however, the strength of this response is governed by the feasible share $q_S$.

This dependency generates the model’s main comparative-static prediction: \emph{aspirational dampening}. Holding fixed the influencer’s composition over feasible alternatives, shifting exposure toward infeasible goods attenuates influence on feasible choices. Intuitively, aspirational content is ``wasted'' from the perspective of influencing the DM's feasible consumption: it increases the chance that an influencer-driven choice that lands outside the feasible set and would be discarded. An influencer who showcases products the DM can actually afford exerts stronger leverage than one who mostly features luxury items, even if their recommendations among affordable options are identical. Thus, building an aspirational image by featuring exclusive products comes at the direct cost of diluting influence over budget-constrained consumers.

Our results are organized into three parts. We begin by providing a behavioral characterization of the AWLM based on two restrictions: \emph{controlled collinearity}, which governs response to composition shifts, and an \emph{affine aspiration penalty}, which governs the dampening effect. Next, we establish that the model's parameters (influence strength and idiosyncratic preferences) are nonparametrically identified from as few as two distinct exposure conditions. Notably, this identification strategy does not rely on variation in the DM's feasible set. Instead, it exploits the \emph{geometry of exposure}: observed choice shifts must be collinear with exposure shifts in a specific way that uniquely pins down the parameters. This provides a toolkit for recovering preferences even in environments where the decision maker faces static constraints (e.g., a fixed budget), provided there is sufficient variation in the aspirational environment. Finally, we develop consistent estimators for settings with noisy choice data.

The remainder of the paper is organized as follows. Section~\ref{sec:related_literature} situates our contribution at the intersection of stochastic choice and social influence, identifying the theoretical gap regarding influence from infeasible alternatives. Section~\ref{sec:model} formalizes the model and provides a microfoundations. We also analyze the mechanism's comparative statics to uncover the \emph{aspirational dampening} effect. Section~\ref{sec:characterization} provides the axiomatic foundation. Finally, Section~\ref{sec:identification} bridges the model to data: we derive a constructive identification strategy to recover the unobservable influence parameters and idiosyncratic preferences from limited exposure variation.

\section{Related Literature}
\label{sec:related_literature}

\paragraph{Stochastic Choice and Social Influence}
Our framework builds on the foundation of the Luce model \citep{Luce_1959} and equivalently the multinomial logit model \citep{McFadden1974}. Recent theoretical work integrates peer influence: \citet{Chambers_Cuhadaroglu_Masatlioglu_2022} model cardinal peer influence, while \citet{Tugce2017} examines mutual influence in interactive decision-making using a choice-theoretic lens. 

A parallel literature relaxes the assumption of fixed attention. \citet{Masatlioglu_Nakajima_Ozbay_2012} and \citet{Manzini_Mariotti_2014} demonstrate how limited consideration distorts choice; \citet{Kops2019Social} and \citet{KashaevLazzatiXiao2023} further explore how social environments modify these consideration sets. Other work departs from standard Luce through different channels: \citet{MatejkaMcKay2015} show that rational inattention generates a modified logit where prior beliefs enter as
action-specific weights, while \citet{ChambersMasatliogluTuransick2024} characterize correlated stochastic choice, relaxing the cross-menu independence implicit in standard random utility models.  AWLM introduces dependence through yet another mechanism: the influencer's distribution enters choice probabilities directly via mixing, and the influencer and decision maker may face different menus.

\paragraph{Peer Effects and Conformity}
Our identification strategy engages with the identification of peer effects, a literature grappling with the reflection problem \citep{Manski_1993, Blume2011, BramoulleEtAl2020}. Theoretical mechanisms in this domain range from status-driven conformity \citep{Bernheim_1994} and social utility alignment \citep{Brock_Durlauf_2001} to network-mediated interactions \citep{LinearInteractions2015, NetworkNorm2020, Boucher2024, Volpe2023}. While empirical strategies have successfully disentangled these effects using random assignment or field experiments \citep{Sacerdote2001, BursztynEtAl2014}, existing models largely focus on symmetric or network influence. In contrast, the AWLM focuses on \emph{directed} influence from a specific figure whose menu strictly contains the decision maker's feasible set.

\paragraph{Aspirations and Reference-Dependent Choice}
Finally, we formalize the role of aspirations. Economics has long recognized the importance of the ``capacity to aspire'' \citep{Appadurai2004} and how reference points shape incentives \citep{KoszegiRabin2006, Genicot2017Aspiration, Dalton_Ghosal_Mani_2016} and status competition \citep{Veblen_1899, Bagwell_Bernheim_1996, Hopkins_Kornienko_2004, Mookherjee_Napel_Ray_2010}. Most directly related is \citet{Guney_Richter_Tsur_2018}, who study deterministic choice steered by unavailable alternatives. We extend this insight to a stochastic framework. Specifically, our model captures ``aspirational dampening,'' where exposure to infeasible alternatives attenuates influence. This is distinct from a complementary ``halo effect,'' which would require a different modeling approach. By focusing on the observed distribution of the aspirational figure, the AWLM generates clean, testable restrictions on choice probabilities.

\section{The Model}
\label{sec:model}
Let $\X$ be a finite universe of alternatives. In standard stochastic choice theory, a \emph{stochastic choice rule} assigns to each nonempty feasible set $S\subseteq \X$ a probability distribution over $S$. We distinguish the follower's feasible set $S \subseteq \X$ from the influencer's menu $I \subseteq \X$, with $S \subseteq I$. The influencer's behavior is represented by an exposure distribution $q$ supported on $I$. Throughout, we identify $q\in\Delta(I)$ with its zero-extension to $\X$, so exposures can be treated as elements of $\Delta(\X)$ without additional notation.

For any exposure $q$ and feasible set $S$, define the \emph{feasible share}
\[
q_S \equiv \sum_{x\in S} q(x)\in[0,1],
\]
and call $1-q_S$ the \emph{aspirational mass}. When $q_S>0$, write $q(\cdot\mid S)\equiv q|_S/q_S\in\Delta(S)$
for the influencer's within-feasible composition.

\subsection{Idiosyncratic preferences: Luce}
The DM has idiosyncratic Luce weights $u:\X\to\mathbb{R}_{++}$. Her idiosyncratic (uninfluenced) choice rule is
\[
p_0(x\mid S)=\frac{u(x)}{\sum_{y\in S}u(y)},
\qquad x\in S.
\]

\subsection{Aspiration-weighted influence}

Fix an influence strength $\alpha\in[0,1]$. Given $(S,q)$, the induced feasible choice rule is
\begin{equation}
\label{eq:awlm-closed}
p(x\mid S;q,\alpha)
=
\frac{(1-\alpha)p_0(x\mid S)+\alpha q(x)}{(1-\alpha)+\alpha q_S},
\qquad x\in S.
\end{equation}
The numerator mixes the DM's idiosyncratic propensity on $S$ with the influencer's raw exposure on $I$.
The denominator imposes feasibility: the realized choice must lie in $S$.

Equivalently, AWLM is ``mix then normalize.'' Let $\rho^S$ be the zero-extension of $p_0(\cdot\mid S)$ to $\X$,
\[
\rho^S(x)\equiv p_0(x\mid S)\,\mathbf{1}\{x\in S\},
\qquad x\in\X,
\]
and define the \emph{attempt target} on $\X$ by
\[
M(\cdot\mid S;q)\equiv (1-\alpha)\rho^S+\alpha q.
\]
Then $p(\cdot\mid S;q,\alpha)$ is exactly $M(\cdot\mid S;q)$ renormalized onto $S$.

\begin{definition}[Aspiration-Weighted Luce Model (AWLM)]
An exposure-dependent stochastic choice rule $(S,q)\mapsto p(\cdot\mid S;q)$ follows an
\emph{Aspiration-Weighted Luce Model} if there exist a Luce weight function
$u:\X\to\mathbb{R}_{++}$ and an influence parameter $\alpha\in(0,1)$\footnote{The cases $\alpha=0$ (no influence) and $\alpha=1$ (pure imitation) are obtained as limits. When $\alpha=1$, the induced choice is $p(\cdot\mid S;q,1)=q(\cdot\mid S)$ provided $q_S>0$.} such that for every
pair of menus $S\subseteq I\subseteq \X$ and every exposure $q\in\Delta(I)$, the induced
choice probabilities satisfy \eqref{eq:awlm-closed}.
\end{definition}

\paragraph{Figure conventions.}
In all simplex diagrams, red points denote influencer exposures $q\in\Delta(\mathcal X)$, blue points denote the DM's idiosyncratic distribution $p_0(\cdot\mid S)\in\Delta(S)$ (embedded in $\Delta(\mathcal X)$ by zero-extension), gray points denote the attempt-level target $M=(1-\alpha)p_0+\alpha q$, green points denote the realized feasible choice $p(\cdot\mid S;q,\alpha)$ obtained by normalizing $M$ onto $S$, and orange points denote the conditional feasible exposure $q(\cdot\mid S)$.

\begin{figure}[H]
  \centering
  \includegraphics[width=0.75\textwidth]{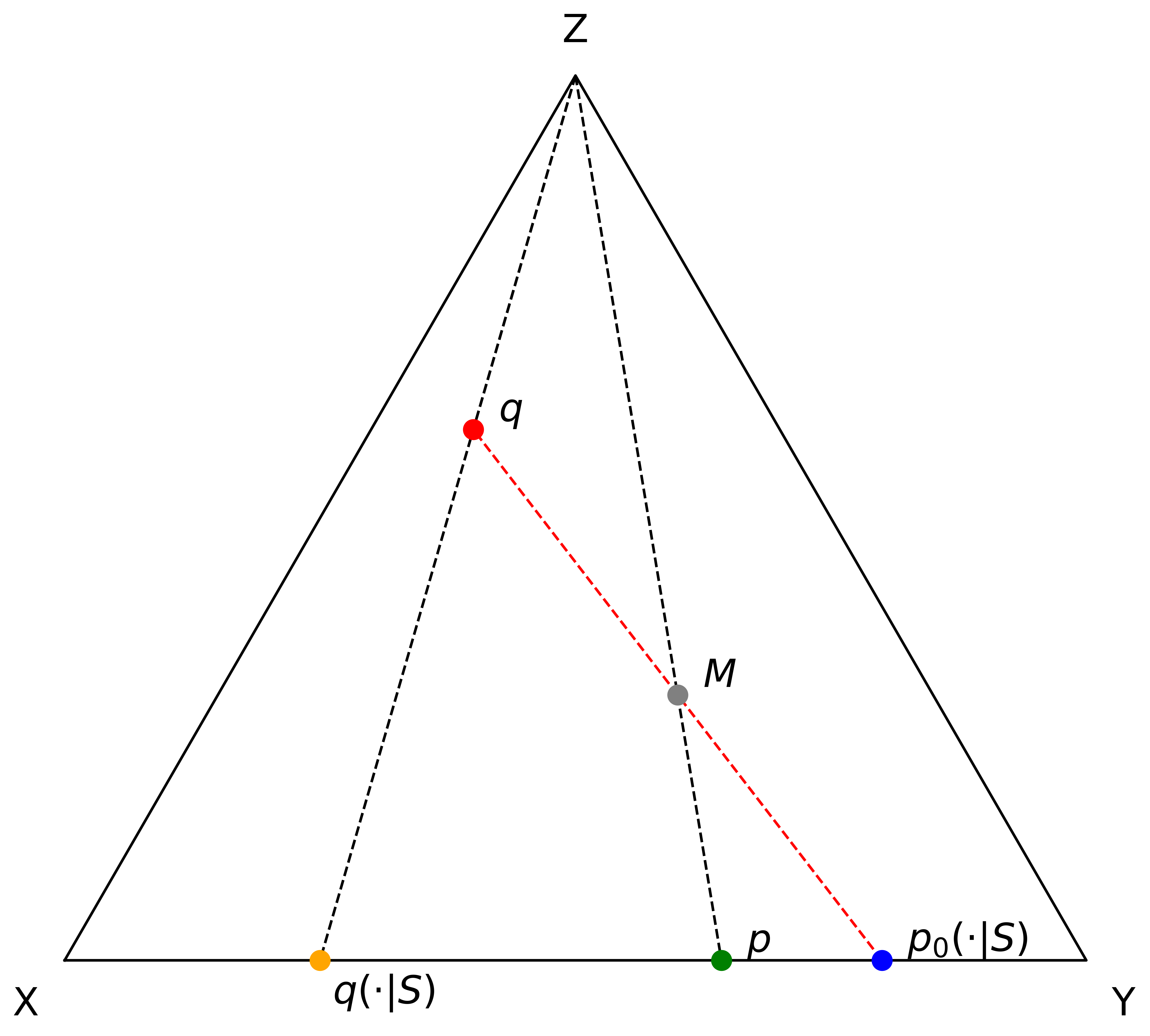}
  \caption{AWLM as ``mix then normalize.'' The influencer's exposure $q$ is mixed with the idiosyncratic feasible distribution $p_0(\cdot\mid S)$ to form the attempt-level target $M=(1-\alpha)p_0+\alpha q$ on the chord $[p_0,q]$. The realized choice $p(\cdot\mid S;q,\alpha)$ is the normalization of $M$ onto the feasible set $S$, represented by the dotted conditioning ray through $M$. The point $q(\cdot\mid S)$ is the analogous projection of $q$ onto $S$.}
  \label{fig:awlm-overview}
\end{figure}

\paragraph{Special case: influence within the feasible set.}
If $S=I$ (no aspirational alternatives), then $q_S=1$ and the normalization disappears:
\[
p(\cdot\mid I;q,\alpha)=(1-\alpha)p_0(\cdot\mid I)+\alpha q(\cdot),
\]
which is the standard convex-mixture form of within-menu influence, as shown in Figure~\ref{fig:benchmark}.

\begin{figure}[H]
  \centering
  \includegraphics[width=0.5\textwidth]{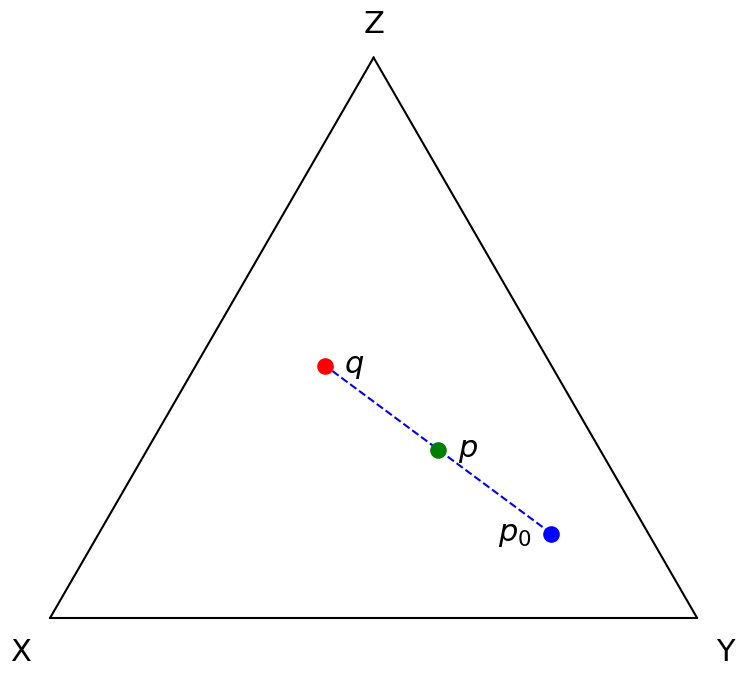}
  \caption{Benchmark: influence within the feasible set ($S=I$). When the influencer's menu coincides with the DM's feasible set, $q_S=1$ and no normalization is needed. The DM's influenced choice $p$ lies on the chord joining $p_0$ and $q$ at position $p = (1-\alpha)p_0 + \alpha q$. This is the standard convex-mixture form of influence \citep{Chambers_Cuhadaroglu_Masatlioglu_2022}.}
  \label{fig:benchmark}
\end{figure}

The AWLM formula~\eqref{eq:awlm-closed} takes as input the influencer's full distribution $q \in \Delta(I)$, which may assign positive mass to alternatives outside the feasible set~$S$. This infeasible portion of the exposure matters only through the total mass $q_S$ diverted away from~$S$, not through how that mass is distributed among infeasible alternatives.
\begin{lemma}[Intra-aspiration irrelevance]
Fix $S\subseteq I$ and $\alpha\in[0,1]$. If $q,q'\in\Delta(I)$ satisfy $q|_S=q'|_S$, then $p(\cdot\mid S;q,\alpha)=p(\cdot\mid S;q',\alpha)$.
\label{lem:outside-irrelevance}
\end{lemma}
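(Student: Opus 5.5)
The plan is to read the conclusion straight off the closed form \eqref{eq:awlm-closed}. The key observation is that the formula depends on the exposure $q$ only through two objects: the values $q(x)$ for $x\in S$ in the numerator, and the feasible share $q_S$ in the denominator. Both are functions of the restriction $q|_S$ alone. So if $q|_S=q'|_S$, every ingredient of the formula agrees for $q$ and $q'$.

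The steps are as follows.

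\emph{Step 1.} Fix $x\in S$. The idiosyncratic term $(1-\alpha)p_0(x\mid S)$ does not involve the exposure at all. The influence term satisfies $\alpha q(x)=\alpha q'(x)$, because $x\in S$ and the two restrictions coincide. Hence the numerators in \eqref{eq:awlm-closed} agree.

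\emph{Step 2.} The feasible share satisfies
\[
q_S=\sum_{y\in S}q(y)=\sum_{y\in S}q'(y)=q'_S,
\]
so the denominators $(1-\alpha)+\alpha q_S$ and $(1-\alpha)+\alpha q'_S$ agree. Since the numerators and denominators are equal, so are the ratios, and therefore $p(x\mid S;q,\alpha)=p(x\mid S;q',\alpha)$ for every $x\in S$.

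\emph{Step 3 (well-definedness).} The only point needing care is that the formula must be well-defined, since the lemma allows $\alpha\in[0,1]$.
\begin{itemize}
\item For $\alpha<1$, the denominator is at least $1-\alpha>0$, so there is no issue.
\item For $\alpha=1$, the denominator is $q_S$, which may vanish. If $q_S>0$, then by the footnote's convention $p(\cdot\mid S;q,1)=q(\cdot\mid S)=q|_S/q_S$, and this depends only on $q|_S$, so the same argument applies.
\item If $q_S=q'_S=0$ with $\alpha=1$, the rule is undefined for both exposures, or given by the same limiting convention for both; in either case the two coincide.
\end{itemize}

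I expect no real obstacle. The only step requiring attention is this bookkeeping for the boundary case $\alpha=1$.

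Equivalently, one could argue through the ``mix then normalize'' representation. The attempt targets $M(\cdot\mid S;q)$ and $M(\cdot\mid S;q')$ agree on $S$, because $\rho^S$ is common to both and $q$ and $q'$ agree on $S$. Renormalization onto $S$ uses only the values of $M$ on $S$, so the two realized choice rules are identical.
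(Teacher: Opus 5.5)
Your proposal is correct and takes the same approach as the paper, which also reads the claim directly off \eqref{eq:awlm-closed}: for $x\in S$, the formula depends on $q$ only through $\{q(x)\}_{x\in S}$ and $q_S$. Your extra treatment of the boundary case $\alpha=1$ (where $q_S=q'_S$ is either zero for both exposures or positive for both) is a harmless refinement that the paper leaves implicit.
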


\begin{proof}
Immediate from \eqref{eq:awlm-closed}: for $x\in S$, the choice probability depends on $q$ only through $\{q(x)\}_{x\in S}$ and $q_S$.  A visualization can be found in Appendix~\ref{app:figures} Figure~\ref{fig:intra-aspiration}
\end{proof}
Lemma~\ref{lem:outside-irrelevance} implies that reallocations of exposure among infeasible alternatives, however dramatic, have no effect on choice over~$S$.

\subsection{Microfoundation: sampling until feasible}

We now provide a procedural foundation for the AWLM.
Recall the zero-extension $\rho^S$ and the attempt target $M(\cdot\mid S;q)$ defined above.

Consider the following procedure. Draw i.i.d.\ samples $\{Y_t\}_{t\ge 1}$ from the attempt target
$M(\cdot\mid S;q)=(1-\alpha)\rho^S+\alpha q$. Let
$\tau\equiv \inf\{t\ge 1:\,Y_t\in S\}, X\equiv Y_\tau.$ Assume $M(S\mid S;q)>0$ (equivalently $(1-\alpha)+\alpha q_S>0$), so $\tau<\infty$ almost surely

\begin{proposition}[Sampling-until-feasible implies AWLM]\label{prop:micro_awlm}
The procedure induces, for each exposure $q$, a choice distribution on $S$ given by
\[
p(x\mid S;q)
=\frac{M(x\mid S;q)}{M(S\mid S;q)}
=\frac{(1-\alpha)p_0(x\mid S)+\alpha q(x)}{(1-\alpha)+\alpha q_S},
\qquad x\in S.
\]
\end{proposition}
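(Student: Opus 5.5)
The plan is a direct computation of the law of the stopped draw $X=Y_\tau$, using only the i.i.d.\ structure of $\{Y_t\}$. Write $m\equiv M(S\mid S;q)=\sum_{y\in S}M(y\mid S;q)$. First I will evaluate $m$. Since $\rho^S$ is the zero-extension of $p_0(\cdot\mid S)$, it puts mass $1$ on $S$. By definition $q$ puts mass $q_S$ on $S$. Hence $m=(1-\alpha)+\alpha q_S$, which is positive by the standing assumption.

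Next, fix $x\in S$ and decompose the event $\{X=x\}$ according to the value of $\tau$:
\[
\Pr(X=x)=\sum_{t\ge 1}\Pr\bigl(Y_1\notin S,\dots,Y_{t-1}\notin S,\;Y_t=x\bigr).
\]
The events $\{\tau=t,\,Y_t=x\}$ are disjoint. Their union is $\{X=x\}$ up to the null event $\{\tau=\infty\}$, and that event has probability $\lim_t(1-m)^t=0$ because $m>0$. By independence and identical distribution, the $t$-th summand equals $(1-m)^{t-1}M(x\mid S;q)$.

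Summing the geometric series, which converges since $0<m\le 1$, gives
\[
\Pr(X=x)=\frac{M(x\mid S;q)}{m}.
\]
Finally, I will substitute $M(x\mid S;q)=(1-\alpha)p_0(x\mid S)+\alpha q(x)$, which holds for $x\in S$ because there $\rho^S(x)=p_0(x\mid S)$. Together with the expression for $m$, this yields exactly \eqref{eq:awlm-closed}. Summing over $x\in S$ gives $1$, so the result is a distribution on $S$.

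The only point needing care is that $\tau<\infty$ almost surely, which is what justifies the decomposition over $t$. The positivity assumption $m>0$ makes this immediate, and it also covers the boundary case $m=1$, where the series reduces to its first term. A cleaner alternative is a one-step renewal argument: condition on $Y_1$ to get $\Pr(X=x)=M(x\mid S;q)+(1-m)\Pr(X=x)$, and solve for $\Pr(X=x)$.
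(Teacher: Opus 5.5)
Your proposal is correct and takes essentially the same route as the paper: decompose $\{X=x\}$ by the value of $\tau$, use the i.i.d.\ structure to write the $t$-th term as $(1-M(S\mid S;q))^{t-1}M(x\mid S;q)$, sum the geometric series, and substitute the definition of $M$. The explicit computation $M(S\mid S;q)=(1-\alpha)+\alpha q_S$ (via $\rho^S(S)=1$) and the check that $\tau<\infty$ almost surely spell out steps the paper leaves implicit, and your renewal-equation variant is an equally valid shortcut.
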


\begin{proof}
Fix $x\in S$. Since attempts are i.i.d.\ and $\tau$ is the first time a draw lands in $S$,
\[
p(x\mid S;q)
=\sum_{t=1}^{\infty}\Pr(\tau=t,\,Y_t=x)
=\sum_{t=1}^{\infty}\Pr(Y_1\notin S)^{t-1}\Pr(Y_1=x).
\]
Because $Y_1\sim M(\cdot\mid S;q)$, we have $\Pr(Y_1\notin S)=1-M(S\mid S;q)$ and $\Pr(Y_1=x)=M(x\mid S;q)$.
Thus
\[
p(x\mid S;q)
=\sum_{t=1}^{\infty}\bigl(1-M(S\mid S;q)\bigr)^{t-1}M(x\mid S;q)
=\frac{M(x\mid S;q)}{M(S\mid S;q)}.
\]
The second equality follows from substituting the definitions of $M(\cdot\mid S;q)$ and $M(S\mid S;q)$.
\end{proof}

The microfoundation\footnote{This procedure has a natural random utility interpretation as a two-class latent logit. At each attempt, the DM enters ``aspiration influence mode'' with probability $\alpha$ (drawing from $q$ via logit with utilities $\log q(x)$) or ``idiosyncratic mode'' with probability $1-\alpha$ (drawing from $p_0(\cdot\mid S)$ via logit with utilities $\log u(x)$). The acceptance filter then conditions on feasibility.} clarifies the role of the two influence parameters that will recur throughout this paper. The primitive $\alpha$ governs influence \emph{per-attempt}: it is the weight placed on the influencer's distribution each time the DM samples from her attempt target $M$. The outcome-level effective weight,
\[
\beta(q_S) \equiv \frac{\alpha q_S}{(1-\alpha)+\alpha q_S},
\]
governs influence among \emph{realized} feasible choices.  Equivalently, writing
$q(\cdot\mid S)\equiv q|_S/q_S$ for the influencer's within-feasible composition,
\begin{equation}
\label{eq:beta-mixture}
p(\cdot\mid S;q,\alpha)=\bigl(1-\beta(q_S)\bigr)p_0(\cdot\mid S)+\beta(q_S)\,q(\cdot\mid S).
\end{equation}

Because $(1-\alpha)+\alpha q_S > 0$, we have $\beta(q_S) \ge 0$; and since
$q_S \le 1$, we have $\beta(q_S) \le \alpha$.  Differentiating shows
$\partial\beta/\partial q_S > 0$ for $\alpha \in (0,1)$. Since attempts landing on infeasible alternatives are discarded and redrawn, $\beta(q_S)\le\alpha$, with equality only when $q_S=1$. This wedge is the source of \emph{aspirational dampening}, developed in Subsection~\ref{subsec:comparative}.

\paragraph{Alternative failure rules.}
If the DM draws once from $M$ and, upon hitting an infeasible alternative, reverts immediately to $p_0(\cdot\mid S)$, the induced choice becomes $(1-\alpha q_S)p_0(x\mid S)+\alpha q(x)$. More generally, if she retries with probability $r\in[0,1]$ after an infeasible draw, the effective weight becomes $\beta_r(q_S)=\alpha q_S/[1-\alpha r(1-q_S)]$. The AWLM corresponds to full persistence $r=1$.

\subsection{Comparative Statics}
\label{subsec:comparative}

The mixture representation \eqref{eq:beta-mixture} decomposes influence into two objects: the influencer's within-feasible composition $q(\cdot\mid S)$ and the effective weight $\beta(q_S)$ placed on it.  We now examine how feasible choice responds to changes in each. Throughout this subsection fix $S\subseteq I$, $p_0(\cdot\mid S)\in\Delta(S)$, and an exposure $q$ with $q_S>0$.

\paragraph{Varying $\alpha$ at fixed exposure.}
Holding the exposure $q$ fixed, increasing $\alpha$ raises $\beta(q_S)$ and moves the induced choice along the segment from $p_0(\cdot\mid S)$ toward $q(\cdot\mid S)$. This is immediate: $\partial\beta/\partial\alpha = q_S/(1-\alpha+\alpha q_S)^2 > 0$. Figure~\ref{fig:alpha-comparison} illustrates.

Figure~\ref{fig:alpha-comparison} visualizes this monotone movement: for fixed $(p_0,q)$, increasing $\alpha$ moves $p(\cdot\mid S;q)$ closer to the influencer.
\begin{figure}[H]
  \centering
  \begin{subfigure}[b]{0.48\textwidth}
    \centering
    \includegraphics[width=\textwidth]{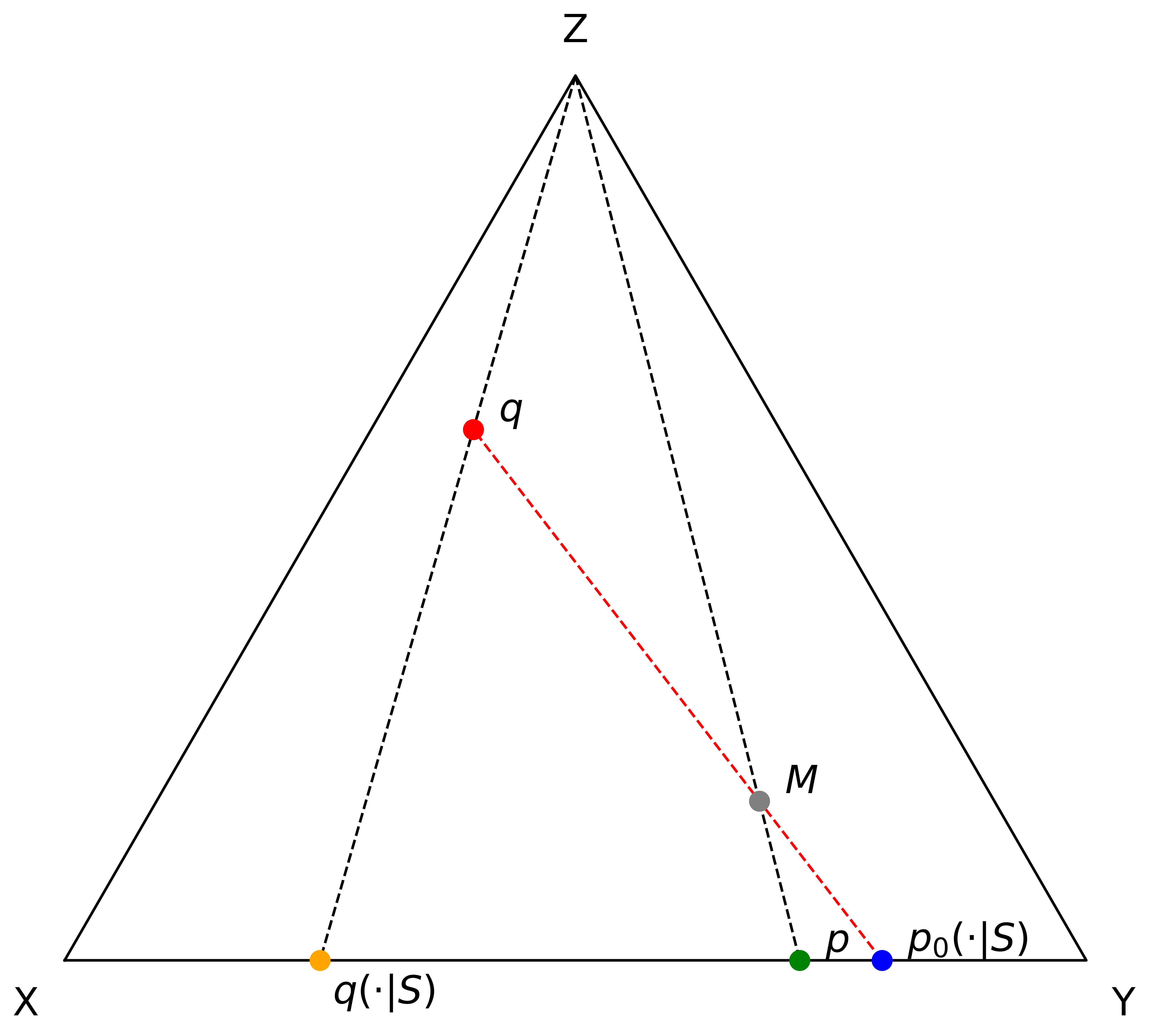}
    \caption{$\alpha = 0.3$}
    \label{fig:alpha-low}
  \end{subfigure}
  \hfill
  \begin{subfigure}[b]{0.48\textwidth}
    \centering
    \includegraphics[width=\textwidth]{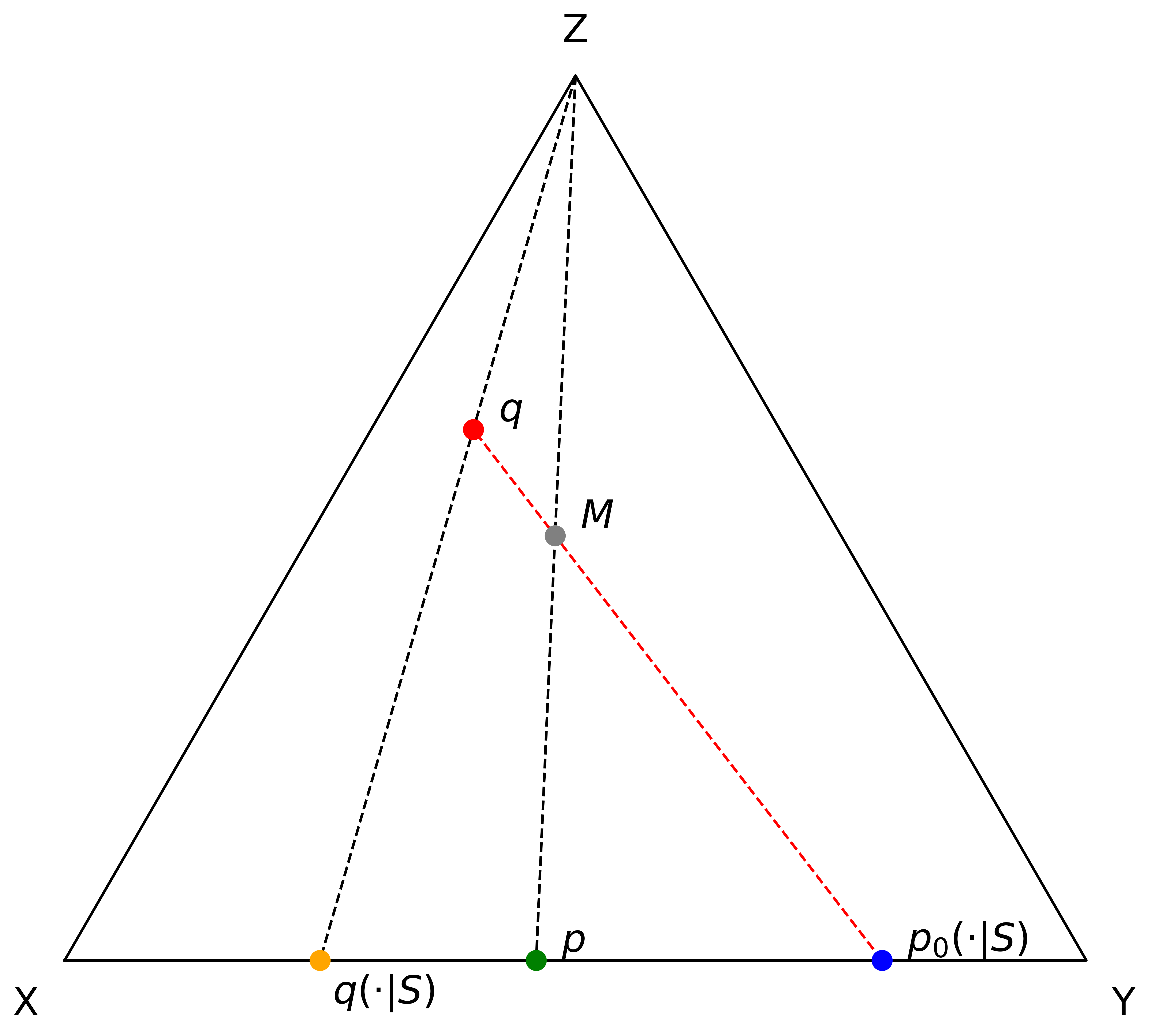}
    \caption{$\alpha = 0.8$}
    \label{fig:alpha-high}
  \end{subfigure}
  \caption{Changing $\alpha$ at fixed exposure. Holding $(S,p_0,q)$ fixed, increasing $\alpha$ shifts the attempt target $M$ toward $q$ and moves the realized feasible choice $p(\cdot\mid S;q,\alpha)$ along the segment joining $p_0(\cdot\mid S)$ and $q(\cdot\mid S)$. The displacement is governed by the effective weight $\beta(q_S)$.}
  \label{fig:alpha-comparison}
\end{figure}

\paragraph{Aspirational dampening as $q_S$ falls}
A more distinctive comparative static concerns infeasible alternatives.
Suppose the influencer reallocates mass from $S$ to $I\setminus S$ while
holding her within-feasible composition $q(\cdot\mid S)$ fixed.

\begin{proposition}[Aspirational dampening]
\label{prop:dampening}
Fix $S$, $p_0(\cdot\mid S)$, and $\alpha\in(0,1)$. Let $q,q'\in\Delta(\mathcal{X})$ satisfy
$q(\cdot\mid S)=q'(\cdot\mid S)$ and $q_S,q_S'>0$. Then
\[
p(\cdot\mid S;q')-p(\cdot\mid S;q)
=
\bigl(\beta(q_S')-\beta(q_S)\bigr)\Big(q(\cdot\mid S)-p_0(\cdot\mid S)\Big),
\]
so the entire effect of changing the infeasible mass operates through the scalar $\beta(q_S)$.
In particular, if $q_S'<q_S$ then $\beta(q_S')<\beta(q_S)$ and
\[
p(\cdot\mid S;q')
\ \text{lies closer to}\ p_0(\cdot\mid S)\ \text{than}\ p(\cdot\mid S;q)
\ \text{along the line segment joining}\ p_0(\cdot\mid S)\ \text{and}\ q(\cdot\mid S).
\]
Equivalently, for every $x\in S$,
\[
p(x\mid S;q)-p_0(x\mid S)=\beta(q_S)\Big(q(x\mid S)-p_0(x\mid S)\Big),
\]
so the deviation from idiosyncratic scales one-for-one with $\beta(q_S)$.
\end{proposition}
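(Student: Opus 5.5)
The plan is to derive everything from the mixture representation \eqref{eq:beta-mixture} and then verify that $\beta$ is strictly increasing. First, I will establish \eqref{eq:beta-mixture} directly from \eqref{eq:awlm-closed}. For $x\in S$, write $q(x)=q_S\,q(x\mid S)$ and set $D\equiv(1-\alpha)+\alpha q_S>0$. Then
\[
p(x\mid S;q)=\frac{1-\alpha}{D}\,p_0(x\mid S)+\frac{\alpha q_S}{D}\,q(x\mid S).
\]
The coefficient $\alpha q_S/D$ is $\beta(q_S)$, and the other coefficient is $(1-\alpha)/D=1-\beta(q_S)$. Subtracting $p_0(x\mid S)$ from both sides gives the final displayed identity:
\[
p(x\mid S;q)-p_0(x\mid S)=\beta(q_S)\bigl(q(x\mid S)-p_0(x\mid S)\bigr).
\]

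Second, I will apply this identity to both $q$ and $q'$. By hypothesis they share the same composition $q(\cdot\mid S)=q'(\cdot\mid S)$, so the right-hand factors coincide. Subtracting the two identities yields the claimed difference formula, with coefficient $\beta(q_S')-\beta(q_S)$. This also shows that $q$ enters only through $q_S$ once the composition is fixed, which is the sense in which the whole effect operates through the scalar $\beta$.

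Third, I will show that $\beta$ is strictly increasing on $(0,1]$ when $\alpha\in(0,1)$. Differentiating gives
\[
\beta'(s)=\frac{\alpha(1-\alpha)}{\bigl((1-\alpha)+\alpha s\bigr)^2}>0.
\]
A derivative-free alternative is to note that $1/\beta(s)=1+\frac{1-\alpha}{\alpha s}$ is strictly decreasing in $s$. Either way, $q_S'<q_S$ implies $\beta(q_S')<\beta(q_S)$.

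Finally, I will read off the geometric claim. Both $p(\cdot\mid S;q)$ and $p(\cdot\mid S;q')$ equal $p_0+t\,(q(\cdot\mid S)-p_0)$, with parameters $t=\beta(q_S)$ and $t'=\beta(q_S')$ respectively. Since $0\le t'<t\le\alpha<1$, both points lie on the segment joining $p_0(\cdot\mid S)$ and $q(\cdot\mid S)$, and the $q'$ point is closer to $p_0$. I will note that the distance along the segment is $t\,\|q(\cdot\mid S)-p_0\|$, and that the statement is trivial in the degenerate case $q(\cdot\mid S)=p_0(\cdot\mid S)$.

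None of these steps is a genuine obstacle. The only care needed is the bookkeeping in the first step: checking $D>0$ and confirming that the two coefficients sum to one, so that the combination is a genuine convex combination.
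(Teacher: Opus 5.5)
Your proposal is correct and follows essentially the same route as the paper: apply the mixture representation \eqref{eq:beta-mixture} to both $q$ and $q'$, subtract using the common composition, and invoke strict monotonicity of $\beta$ to get the geometric claim. The only difference is that you also derive \eqref{eq:beta-mixture} from \eqref{eq:awlm-closed} and compute $\beta'(s)=\alpha(1-\alpha)/\bigl((1-\alpha)+\alpha s\bigr)^2$ explicitly, whereas the paper takes both facts from the preceding discussion.
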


Proposition~\ref{prop:dampening} gives a transparent interpretation of the normalization step. When the influencer allocates more probability mass to infeasible options (smaller $q_S$), the DM is less likely to ``hit'' a feasible influencer-drawn option on any given attempt. Under the sampling-until-feasible microfoundation, this reduces the chance that the final realized feasible choice is directly governed by the influencer rather than by the idiosyncratic choice. The wedge between the attempt-levelparameter $\alpha$ and the outcome-level effective weight $\beta(q_S)$ is therefore the model's central aspirational channel.

Figure~\ref{fig:aspiration-dampening} illustrates Proposition~\ref{prop:dampening}.

\begin{figure}[ht!]
  \centering
  \begin{subfigure}[b]{0.48\textwidth}
    \centering
    \includegraphics[width=\textwidth]{highZalpha5.png}
    \caption{High aspirational mass: $q_S = 0.4$}
    \label{fig:high-aspiration}
  \end{subfigure}
  \hfill
  \begin{subfigure}[b]{0.48\textwidth}
    \centering
    \includegraphics[width=\textwidth]{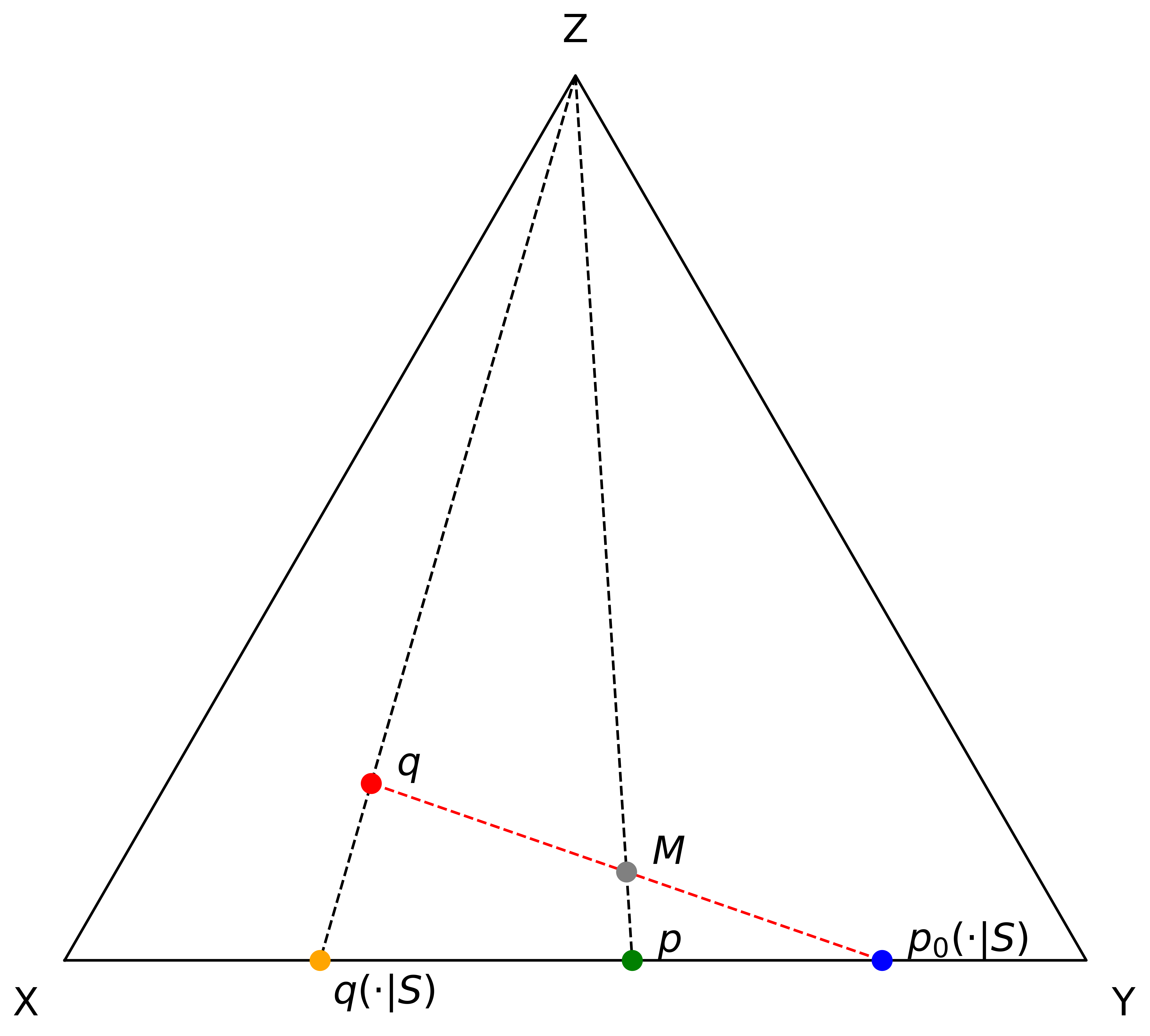}
    \caption{Low aspirational mass: $q_S = 0.8$}
    \label{fig:low-aspiration}
  \end{subfigure}
  \caption{Aspirational dampening (Proposition~\ref{prop:dampening}). The within-$S$ composition $q(\cdot\mid S)$ is held fixed (orange), while the feasible share $q_S$ changes, moving $q$ (red) along the ray toward the infeasible vertex. Since $p(\cdot\mid S;q,\alpha)=(1-\beta(q_S))p_0+\beta(q_S)q(\cdot\mid S)$ with $\beta(\cdot)$ increasing in $q_S$, a lower $q_S$ reduces $\beta(q_S)$ and pulls the realized choice $p$ (green) toward $p_0$ (blue).}
  \label{fig:aspiration-dampening}
\end{figure}

\section{Characterization}
\label{sec:characterization}

This section gives an axiomatic characterization of the exposure-dependent choice rules generated by the Aspiration-Weighted Luce Model (AWLM). We first work menu-by-menu: fixing a feasible set $S$ with $|S|\ge 3$, we characterize when the rule $q\mapsto p(\cdot\mid S;q)$ can be written as
\begin{equation}\label{eq:awlm-permenu}
p(x\mid S;q)=\frac{(1-\alpha)p_0(x)+\alpha q(x)}{(1-\alpha)+\alpha q_S},
\qquad x\in S,
\end{equation}
for some $(\alpha,p_0)\in(0,1)\times \Delta(S)$. We then impose cross-menu consistency to recover a single influence parameter and a Luce utility representation. 

Fix a feasible set $S$ and interpret $q_S$ as the influencer's total attention to feasible alternatives. Our characterization proceeds in two steps. First, we hold $q_S$ fixed and ask how feasible choice responds to changes in the influencer's feasible behavior, controlling for the amount of unaffordable exposure. This delivers a well-defined marginal effect $\lambda(S,q_S)$: within a fixed feasible-share slice, any change in feasible exposure shifts feasible choice in a common direction and at a common rate. Second, we compare these controlled marginal effects across different feasible shares. The key restriction is a disciplined, one-parameter way in which persuasion weakens as the influencer becomes more aspirational (i.e., as $q_S$ falls), which is precisely the content needed to recover a menu-invariant influence parameter.

\subsection{Per-menu characterization}

On each $\mu$-level, the AWLM denominator $(1-\alpha)+\alpha\mu$ is constant, so induced choice should respond linearly to the feasible restriction $q|_S$. The following axiom captures this linearity without presupposing the functional form. 

\begin{axiom}[Controlled Collinearity]\label{ax:A1}
Fix $\mu\in(0,1)$. For any two exposures $q_1,q_2\in\Delta(\mathcal{X})$ with $q_{1,S}=q_{2,S}=\mu$:

\smallskip
\noindent\emph{(i) Intra-aspiration irrelevance.} If $q_1|_S=q_2|_S$, then $p(\cdot\mid S;q_1)=p(\cdot\mid S;q_2)$.

\smallskip
\noindent\emph{(ii) Proportional response.} For all $x,y\in S$,
\begin{equation}\label{eq:A1}
\bigl[p(x \mid S; q_1) - p(x \mid S; q_2)\bigr]\bigl[q_1(y) - q_2(y)\bigr]
=
\bigl[p(y \mid S; q_1) - p(y \mid S; q_2)\bigr]\bigl[q_1(x) - q_2(x)\bigr].
\end{equation}
\end{axiom}

\noindent
Part~(i) restates Lemma~\ref{lem:outside-irrelevance} as a behavioral 
restriction; on a fixed $\mu$-level, the choice rule depends on $q$ only through its feasible restriction $q|_S$. Part~(ii) requires that the choice shift $\Delta p \equiv p(\cdot \mid S; q_1) - p(\cdot \mid S; q_2)$ and the exposure shift $\Delta q|_S \equiv q_1|_S - q_2|_S$ be collinear.

Figure~\ref{fig:controlled-collinearity} illustrates the geometry. Within a fixed $\mu$-level (a horizontal slice of the simplex in the figure), varying the influencer's within-feasible composition $q(\cdot \mid S)$ traces out a path of induced choices. Axiom~\ref{ax:A1}(ii) requires that the difference vector $p_1 - p_2$ be parallel to $q_1|_S - q_2|_S$ for any two points on this path. Equivalently, every $2 \times 2$ submatrix of $[\Delta p \;\; \Delta q|_S]$ has zero determinant.

\begin{figure}[H]
  \centering
  \includegraphics[width=0.8\textwidth]{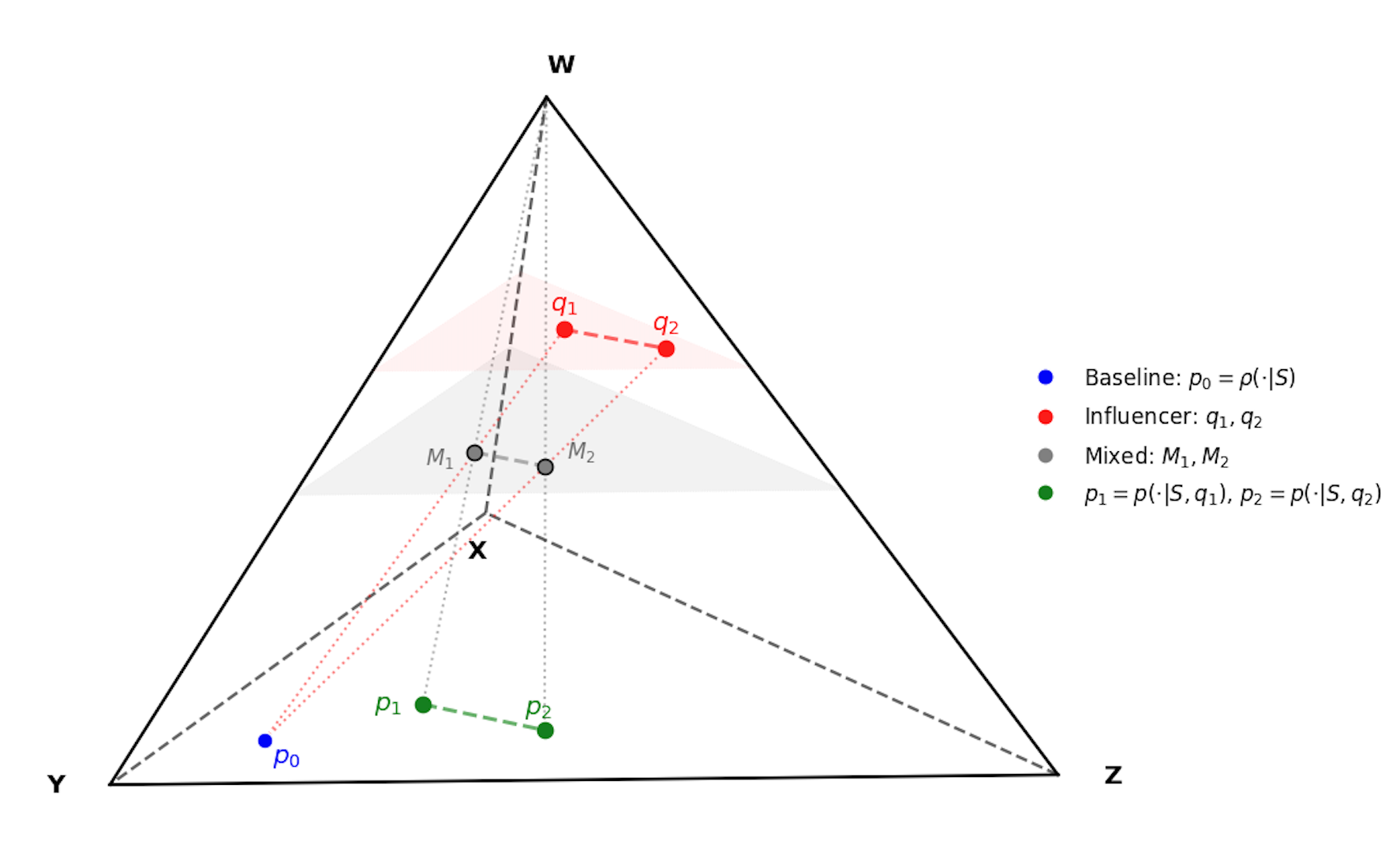}
  \caption{Intuition for Controlled Collinearity (Axiom~\ref{ax:A1}). $S=\{x,y,z\}$; $I=\{x,y,z,w\}$ The influencer's exposures $q_1,q_2$ (red) lie in the interior of $\Delta(I)$. Each exposure mixes with the baseline $p_0$ (blue) to form attempt targets $M_1,M_2$ (gray), which are projected onto $S$ to yield realized choices $p_1,p_2$ (green). Axiom~\ref{ax:A1} requires $p_2-p_1 \parallel (q_2-q_1)|_S$.}
  \label{fig:controlled-collinearity}
\end{figure}

\begin{lemma}[Controlled multiplier and level-affine form]\label{lem:lambda}
Assume $|S|\ge 3$ and Axiom~\ref{ax:A1}. Then for each $\mu\in(0,1)$ there exists a unique scalar
$\lambda(S,\mu)$ such that for all $q_1,q_2\in\Delta(\mathcal{X})$ with $q_{1,S}=q_{2,S}=\mu$,
\begin{equation}\label{eq:lambda-def}
p(\cdot\mid S;q_1)-p(\cdot\mid S;q_2)=\lambda(S,\mu)\bigl(q_1|_S-q_2|_S\bigr).
\end{equation}
Moreover, defining the \emph{level intercept}
\[
a(S,\mu)\equiv   p(\cdot\mid S;q)-\lambda(S,\mu)\,q|_S \qquad \text{for any } q\text{ with }q_S=\mu,
\]
the vector $a(S,\mu)\in\mathbb{R}^S$ is well-defined (independent of the chosen $q$) and
\begin{equation}\label{eq:affine-level}
p(\cdot\mid S;q)=a(S,\mu)+\lambda(S,\mu)\,q|_S \qquad \forall q\text{ with }q_S=\mu.
\end{equation}
\end{lemma}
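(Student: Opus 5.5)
Fix $\mu\in(0,1)$ and write $Q_\mu=\{q|_S : q\in\Delta(\mathcal X),\ q_S=\mu\}$. This is the set of nonnegative vectors in $\mathbb{R}^S$ with coordinate sum $\mu$; the remaining mass $1-\mu>0$ can always be placed on some alternative outside $S$. (If $\mathcal X=S$, the level $\mu<1$ is vacuous and there is nothing to prove, so assume $\mathcal X\setminus S\neq\emptyset$.) By Axiom~\ref{ax:A1}(i), on this level $p(\cdot\mid S;q)$ depends only on $v=q|_S$, so I will write $F(v)\equiv p(\cdot\mid S;q)$ as a map $Q_\mu\to\Delta(S)$.

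\emph{Step 1: a multiplier for each pair.} Take $v_1\neq v_2$ in $Q_\mu$ and set $d=v_1-v_2$, $\Delta=F(v_1)-F(v_2)$. Equation~\eqref{eq:A1} says $\Delta_x d_y=\Delta_y d_x$ for all $x,y$. Pick $y$ with $d_y\neq 0$. Then $\Delta_x=(\Delta_y/d_y)\,d_x$ for every $x$, so $\Delta=c(v_1,v_2)\,d$ with $c(v_1,v_2)=\Delta_y/d_y$. This scalar does not depend on which $y$ was chosen, since any two valid choices give consistent ratios by \eqref{eq:A1}. Note that $\sum_x d_x=0$, so $d$ has at least two nonzero coordinates.

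\emph{Step 2: the multiplier is the same for all pairs (main obstacle).} Here I use $|S|\ge 3$. For three points $v_1,v_2,v_3\in Q_\mu$ whose differences $d_{12},d_{23}$ are linearly independent, additivity gives
\[
c_{13}(d_{12}+d_{23})=\Delta_{13}=\Delta_{12}+\Delta_{23}=c_{12}d_{12}+c_{23}d_{23}.
\]
Independence then forces $c_{12}=c_{23}=c_{13}$. Since $Q_\mu$ is a simplex of dimension $|S|-1\ge 2$, differences are not all collinear. For an arbitrary pair $(v_1,v_2)$, I choose a point $w\in Q_\mu$ off the line through $v_1,v_2$; this gives $c(v_1,v_2)=c(v_1,w)$. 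For two arbitrary pairs $(v_1,v_2)$ and $(v_3,v_4)$, I chain through auxiliary points in general position, for example $c(v_1,v_2)=c(v_1,w)=c(w,v_3)=c(v_3,v_4)$, picking $w$ off the relevant lines. This is always possible because a finite union of lines cannot cover a set of dimension at least $2$. The care needed is in handling boundary points and degenerate configurations, and I expect this chaining to be the only delicate part. The common value is $\lambda(S,\mu)$, and \eqref{eq:lambda-def} follows.

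\emph{Step 3: uniqueness and the intercept.} Uniqueness holds because some $d\neq 0$ exists in the level set, so $\lambda(S,\mu)$ is determined by $\Delta=\lambda d$. Now define $a(S,\mu)=F(q|_S)-\lambda(S,\mu)\,q|_S$. For two exposures $q_1,q_2$ on the level, \eqref{eq:lambda-def} gives
\[
F(q_1|_S)-\lambda q_1|_S=F(q_2|_S)-\lambda q_2|_S,
\]
so $a(S,\mu)$ is well defined. Rearranging yields \eqref{eq:affine-level}.

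Without $|S|\ge 3$ the argument fails: all differences in $Q_\mu$ are then collinear, so Step 2 cannot identify the multipliers across pairs, and $F$ need not be affine.
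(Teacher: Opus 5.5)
Your proof is correct and takes essentially the same route as the paper: Axiom~\ref{ax:A1}(i) turns the level into a map of $v=q|_S$, Axiom~\ref{ax:A1}(ii) gives a multiplier for each pair, and the three-point argument with linearly independent differences (available because $|S|\ge 3$) forces a common $\lambda(S,\mu)$, after which the intercept follows at once. Your chaining through an auxiliary point $w$ chosen off finitely many lines makes explicit the step the paper compresses into ``the same three-point argument implies that every pair shares this multiplier.''
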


\begin{proof}
Fix $\mu\in(0,1)$. By Axiom~\ref{ax:A1}(i), within the $\mu$-level the restriction $v\equiv q|_S$ pins down the induced choice.
Write $P(v)$ for the induced choice when $q|_S=v$ and $q_S=\mu$.

Axiom~\ref{ax:A1}(ii) implies that for any $v_1,v_2$ in the $\mu$-level, the vectors $P(v_1)-P(v_2)$ and $v_1-v_2$ are
collinear, so there exists a scalar $\lambda_{12}$ with $P(v_1)-P(v_2)=\lambda_{12}(v_1-v_2)$.

Choose $v_0,v_1,v_2$ in the $\mu$-level such that $(v_1-v_0)$ and $(v_2-v_0)$ are linearly independent (possible since $|S|\ge3$).
Let $P_k\equiv   P(v_k)$. Write
$P_1-P_0=\lambda_{10}(v_1-v_0)$,
$P_2-P_0=\lambda_{20}(v_2-v_0)$, and
$P_1-P_2=\lambda_{12}(v_1-v_2)$.
But also
$P_1-P_2=(P_1-P_0)-(P_2-P_0)=\lambda_{10}(v_1-v_0)-\lambda_{20}(v_2-v_0)$, while
$v_1-v_2=(v_1-v_0)-(v_2-v_0)$.
Linear independence forces $\lambda_{10}=\lambda_{12}=\lambda_{20}$. Denote the common value by $\lambda(S,\mu)$.

The same three-point argument implies that \emph{every} pair $v_1,v_2$ in the level shares this same multiplier, yielding
\eqref{eq:lambda-def}. Finally, define $a(S,\mu)\equiv   P(v)-\lambda(S,\mu)v$ for any $v$ in the level; \eqref{eq:lambda-def}
implies this does not depend on $v$, and \eqref{eq:affine-level} follows.
\end{proof}

\medskip

Geometrically, the multiplier $\lambda(S,\mu)$ governs the strength of the linkage between the influencer's feasible shifts and the DM's reaction. The subsequent axiom restricts how this linkage weakens as the influencer becomes more aspirational.


To state the second part cleanly, define the \emph{normalized level intercept}
\begin{equation}\label{eq:def-pi}
\pi(S,\mu)\equiv   \frac{a(S,\mu)}{\mathbf{1}^\top a(S,\mu)} \in \Delta(S),
\end{equation}
whenever $\mathbf{1}^\top a(S,\mu)>0$.

\begin{axiom}[Leverage Line and Radial Consistency]\label{ax:A2}
Let $\lambda(S, \mu)$ be defined as in Lemma~\ref{lem:lambda}.
\begin{enumerate}[label=\textup{(\roman*)}, leftmargin=2em]
\item \textbf{Leverage line} For all $\mu_1,\mu_2\in(0,1)$,
\begin{equation}\label{eq:A2i}
\frac{1}{\lambda(S,\mu_1)}-\frac{1}{\lambda(S,\mu_2)}=\mu_1-\mu_2,
\qquad\text{and}\qquad
\frac{1}{\lambda(S,\mu)}>\mu\ \ \forall \mu\in(0,1).
\end{equation}

\item \textbf{Radial consistency} For any two exposures $q_1,q_2\in\Delta(\mathcal X)$ with
$q_{1,S}>0$ and $q_{2,S}>0$, if they have the same within-feasible composition,
$q_1(\cdot\mid S)=q_2(\cdot\mid S)$, let $v\equiv q_1(\cdot\mid S)\in\Delta(S)$ denote this common composition.
Then $p(\cdot\mid S;q_1)$, $p(\cdot\mid S;q_2)$, and $v$ are collinear. Equivalently, for all $x,y\in S$,
\begin{equation}\label{eq:A2ii_radial}
\bigl[p(x\mid S;q_1)-v(x)\bigr]\bigl[p(y\mid S;q_2)-v(y)\bigr]
=
\bigl[p(y\mid S;q_1)-v(y)\bigr]\bigl[p(x\mid S;q_2)-v(x)\bigr].
\end{equation}
\end{enumerate}
\end{axiom}

\noindent
Axiom~\ref{ax:A2}(i) says that $1/\lambda(S,\mu)-\mu$ is constant in $\mu$. Denote this constant by $\kappa(S)>0$, so
\[
\frac{1}{\lambda(S,\mu)}=\mu+\kappa(S),
\qquad\text{equivalently}\qquad
\lambda(S,\mu)=\frac{1}{\mu+\kappa(S)}.
\]
Hence $\lambda(S,\mu)$ is \emph{decreasing} in $\mu$: when the influencer devotes less total mass to the feasible set (smaller $\mu$), the normalization factor is smaller and a given \emph{absolute} shift in the feasible exposure vector $q|_S$ produces a larger shift in
$p(\cdot\mid S;\cdot)$.

For aspirational dampening, however, the economically relevant comparison fixes the
influencer's \emph{within-feasible composition} $q(\cdot\mid S)$ rather than the unnormalized
vector $q|_S$. On the $\mu$-level, $q|_S=\mu\,q(\cdot\mid S)$, so
\[
p(\cdot\mid S;q_1)-p(\cdot\mid S;q_2)
=\underbrace{\mu\,\lambda(S,\mu)}_{=:~\beta(S,\mu)}
\Big(q_1(\cdot\mid S)-q_2(\cdot\mid S)\Big).
\]
The scalar $\beta(S,\mu)=\mu\lambda(S,\mu)=\mu/(\mu+\kappa(S))$ is increasing in $\mu$. Thus, as $\mu$ falls (more aspirational mass), the effect of a given change in $q(\cdot\mid S)$ on feasible choice is dampened. Figure~\ref{fig:leverage} provides more intuition.

\begin{figure}[ht!]
  \centering
  \includegraphics[width=0.85\textwidth]{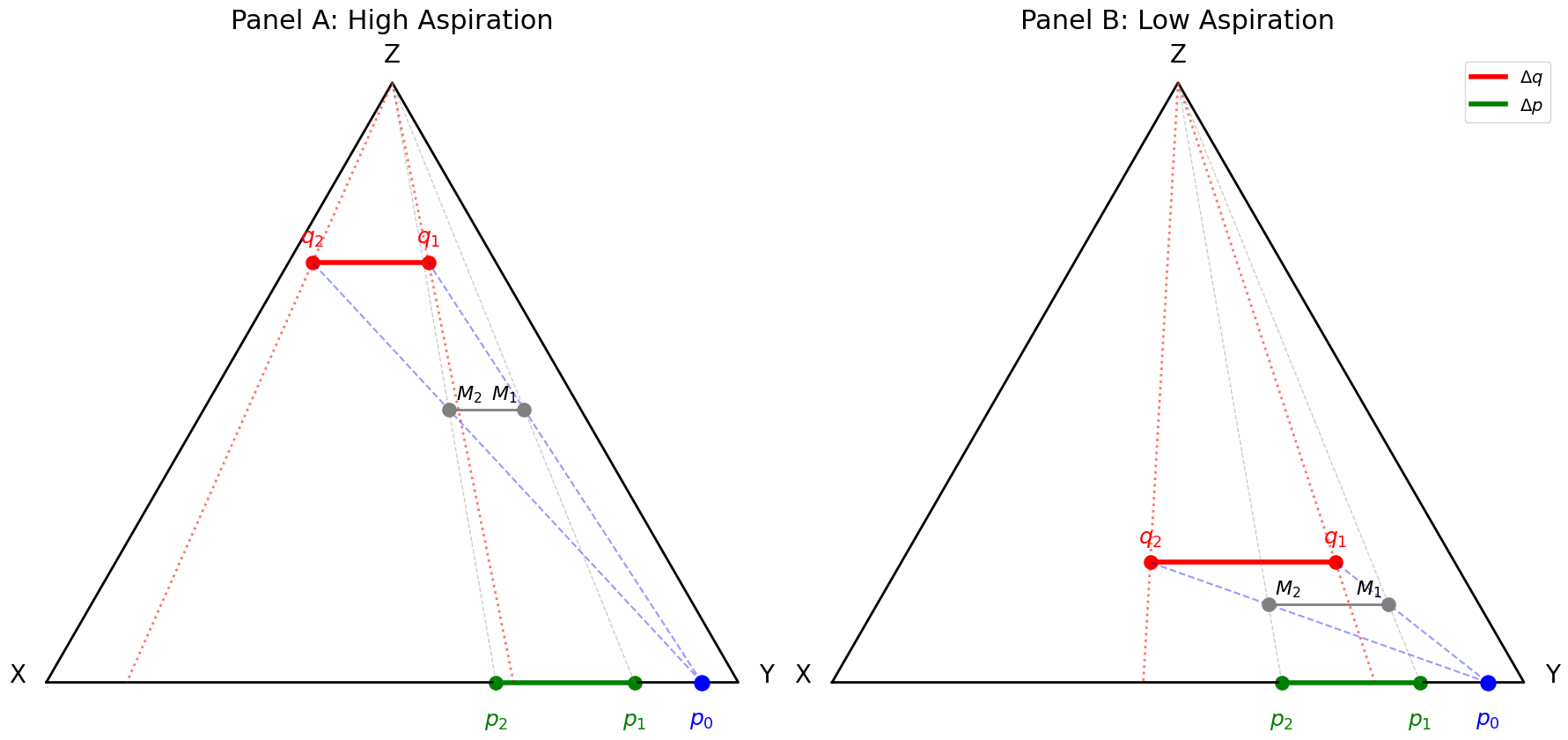}
  \caption{Linking levels via the leverage line (Axiom~\ref{ax:A2}(i)). The panels depict the same within-feasible composition shift evaluated at two aspiration levels $\mu=q_S$. Across $\mu$, the level multiplier $\lambda(S,\mu)$ changes so that $1/\lambda(S,\mu)$ moves one-for-one with $\mu$. Equivalently, the outcome-relevant multiplier for changes in $q(\cdot\mid S)$, namely $\beta(S,\mu)=\mu\lambda(S,\mu)$, is smaller at lower $\mu$ (aspirational dampening).}
  \label{fig:leverage}
\end{figure}
%



\begin{theorem}[Per-menu AWLM Characterization]\label{thm:awlm-permenu}
Fix a nonempty feasible set $S\subseteq\mathcal{X}$ with $|S|\ge3$.
An exposure-dependent choice rule $q\mapsto p(\cdot\mid S;q)$ admits an AWLM representation
\eqref{eq:awlm-permenu} if and only if Axioms~\ref{ax:A1} and~\ref{ax:A2} hold.

Moreover, define $\kappa(S)\equiv   \frac{1}{\lambda(S,\bar\mu)}-\bar\mu$ for any $\bar\mu\in(0,1)$; by Axiom~\ref{ax:A2}(i) this is
well-defined and positive. Then the AWLM parameters are uniquely recovered by
\[
\alpha=\frac{1}{1+\kappa(S)},
\qquad
p_0=\pi(S,\mu)\ \ \text{for any }\mu\in(0,1).
\]
\end{theorem}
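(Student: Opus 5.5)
The plan is to prove the two directions separately. Necessity is a direct computation from \eqref{eq:awlm-permenu}. For sufficiency, I would start from the level-affine form of Lemma~\ref{lem:lambda} and use Axiom~\ref{ax:A2} to pin down, first, the $\mu$-dependence of $\lambda$ and, second, the $\mu$-invariance of the normalized intercept $\pi$.

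\emph{Necessity.} Suppose \eqref{eq:awlm-permenu} holds with $(\alpha,p_0)\in(0,1)\times\Delta(S)$. On a $\mu$-level the denominator is the constant $(1-\alpha)+\alpha\mu$, so
\[
p(\cdot\mid S;q)=\frac{(1-\alpha)p_0}{(1-\alpha)+\alpha\mu}+\frac{\alpha}{(1-\alpha)+\alpha\mu}\,q|_S .
\]
This expression is affine in $q|_S$. Part (i) of Axiom~\ref{ax:A1} is then Lemma~\ref{lem:outside-irrelevance}. Part (ii) holds because $\Delta p$ is a scalar multiple of $\Delta q|_S$. Reading off the coefficient gives
\[
\lambda(S,\mu)=\frac{\alpha}{(1-\alpha)+\alpha\mu},\qquad \frac{1}{\lambda(S,\mu)}=\mu+\frac{1-\alpha}{\alpha},
\]
which is exactly the leverage line with $\kappa=(1-\alpha)/\alpha>0$; in particular $1/\lambda>\mu$. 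For radial consistency I would use the mixture form \eqref{eq:beta-mixture}. With common composition $v$, it gives $p(\cdot\mid S;q_i)-v=(1-\beta(q_{i,S}))(p_0-v)$ for $i=1,2$. Both vectors are multiples of $p_0-v$, so \eqref{eq:A2ii_radial} holds.

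\emph{Sufficiency.} By Lemma~\ref{lem:lambda}, for every $\mu\in(0,1)$ we have $p=a(S,\mu)+\lambda(S,\mu)q|_S$ on the $\mu$-level. By Axiom~\ref{ax:A2}(i), $\lambda(S,\mu)=1/(\mu+\kappa)$ with $\kappa>0$. Set $\alpha\equiv 1/(1+\kappa)\in(0,1)$, so that $\lambda(S,\mu)=\alpha/((1-\alpha)+\alpha\mu)$ and $\beta_\mu\equiv\mu\lambda(S,\mu)=\mu/(\mu+\kappa)\in(0,1)$.

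Next I would control the intercept. Summing the affine form over $S$ and using $\mathbf 1^\top p=1$ and $\mathbf 1^\top q|_S=\mu$ gives $\mathbf 1^\top a(S,\mu)=1-\beta_\mu=\kappa/(\mu+\kappa)>0$. Hence $\pi(S,\mu)$ is well defined, and for $q_S=\mu$ with composition $v$,
\[
p(\cdot\mid S;q)=(1-\beta_\mu)\,\pi(S,\mu)+\beta_\mu v .
\]
Once I show $\pi(S,\mu)\equiv p_0$ is independent of $\mu$, substituting $\beta_\mu$ and simplifying yields exactly \eqref{eq:awlm-permenu}.

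\emph{Main obstacle: $\mu$-invariance of $\pi$.} This is the step I expect to be hardest. Fix $\mu_1\neq\mu_2$ and any $v\in\Delta(S)$. I realize $v$ at both levels by setting $q_i=\mu_i v+(1-\mu_i)\delta_w$ for some $w\notin S$. Such a $w$ exists because levels $\mu\in(0,1)$ are only meaningful when $\mathcal X\setminus S\neq\emptyset$. From the display above, $p(\cdot\mid S;q_i)-v=(1-\beta_{\mu_i})(\pi_i-v)$ with $1-\beta_{\mu_i}\neq0$. Radial consistency therefore forces $\pi_1-v\parallel\pi_2-v$ for \emph{every} $v\in\Delta(S)$. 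Suppose $\pi_1\neq\pi_2$. Since $|S|\ge3$, $\Delta(S)$ has dimension at least $2$, so I can choose $v$ off the line through $\pi_1,\pi_2$. For that $v$, the points $v,\pi_1,\pi_2$ are not collinear, which contradicts the parallelism. Hence $\pi$ is constant in $\mu$.

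\emph{Uniqueness.} Any AWLM representation has $\lambda=\alpha/((1-\alpha)+\alpha\mu)$, so $\kappa=(1-\alpha)/\alpha$ and therefore $\alpha=1/(1+\kappa(S))$. Its intercept is proportional to $p_0$, so $p_0=\pi(S,\mu)$ for any $\mu$. The boundary level $\mu=1$ is not covered by the axioms, so I would treat the representation as asserted on the levels $\mu\in(0,1)$ where the axioms apply.
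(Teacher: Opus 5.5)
Your proposal is correct and has the same architecture as the paper's proof. Necessity is by direct computation, with the $\beta$-mixture giving radial consistency. Sufficiency goes through Lemma~\ref{lem:lambda}, the leverage line, the adding-up identity $\mathbf 1^\top a(S,\mu)=\kappa(S)/(\mu+\kappa(S))$, and radial consistency to make the intercept direction independent of $\mu$.

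You differ from the paper in the invariance step. The paper works in coordinates with vertex compositions $v=\delta_x$. For $y,z\neq x$ the induced choices are just $a_i(y),a_i(z)$, so radial consistency becomes $a_1(y)a_2(z)=a_1(z)a_2(y)$. To turn this into $a_2=t\,a_1$ it needs a strictly positive coordinate of $a_1$, which is why it first proves $a(S,\mu)\ge 0$; the same vertex exposures thus deliver both nonnegativity and invariance. Your argument is coordinate-free: $p(\cdot\mid S;q_i)-v=(1-\beta_{\mu_i})(\pi_i-v)$ forces $\pi_1-v\parallel\pi_2-v$, and one $v$ off the line through $\pi_1,\pi_2$ rules out $\pi_1\neq\pi_2$. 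It uses only $\beta_\mu<1$ and $\dim\Delta(S)\ge 2$, with no sign information. It also runs with vertices alone, since a line contains at most two vertices of $\Delta(S)$ and $|S|\ge 3$. You also spell out the uniqueness argument, which the paper leaves implicit.

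What you omit is the paper's Step~1. You show $\mathbf 1^\top a(S,\mu)>0$ and conclude that $\pi(S,\mu)$ is well defined. But the representation requires $p_0\in\Delta(S)$, and your argument only places $\pi(S,\mu)$ in the affine hull of the simplex, not in the simplex. The repair is one line from your own formula. With $v=\delta_x$ and $y\neq x$, you get $0\le p(y\mid S;q)=(1-\beta_\mu)\,\pi(S,\mu)(y)$, and $1-\beta_\mu>0$ gives $\pi(S,\mu)(y)\ge 0$.

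Your caveat about $\mu=1$ is accurate, and the paper's proof leaves it implicit. Lemma~\ref{lem:lambda}, and hence the paper's Step~3, covers only $q_S\in(0,1)$. At $q_S=1$ the axioms at most confine $p(\cdot\mid S;q)$ to the line through $v$ and $p_0$; they do not fix the point.
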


\begin{proof}
\textbf{($\Rightarrow$)} Suppose AWLM holds with parameters $(\alpha,p_0)$.
Fix $\mu\in(0,1)$ and take $q_1,q_2$ with $q_{1,S}=q_{2,S}=\mu$. Since the denominator $(1-\alpha)+\alpha\mu$ is common,
\[
p(\cdot\mid S;q_1)-p(\cdot\mid S;q_2)=\frac{\alpha}{(1-\alpha)+\alpha\mu}\,(q_1|_S-q_2|_S),
\]
which implies Axiom~\ref{ax:A1}(i) and~(ii), and identifies
\[
\lambda(S,\mu)=\frac{\alpha}{(1-\alpha)+\alpha\mu}.
\]
Then
\[
\frac{1}{\lambda(S,\mu)}=\frac{(1-\alpha)+\alpha\mu}{\alpha}=\mu+\frac{1-\alpha}{\alpha},
\]
so Axiom~\ref{ax:A2}(i) holds with $\kappa(S)=(1-\alpha)/\alpha>0$.

To verify Axiom~\ref{ax:A2}(ii), take any $q_1,q_2$ with $q_{1,S},q_{2,S}>0$ and
$q_1(\cdot\mid S)=q_2(\cdot\mid S)=:v\in\Delta(S)$. Using the mixture representation
$p(\cdot\mid S;q,\alpha)=(1-\beta(q_S))p_0+\beta(q_S)\,q(\cdot\mid S)$, we have
\[
p(\cdot\mid S;q_i,\alpha)=(1-\beta(q_{i,S}))p_0+\beta(q_{i,S})v,\qquad i=1,2,
\]
hence
\[
p(\cdot\mid S;q_i,\alpha)-v=(1-\beta(q_{i,S}))(p_0-v),
\]
so $p(\cdot\mid S;q_1,\alpha)$, $p(\cdot\mid S;q_2,\alpha)$, and $v$ are collinear, establishing
Axiom~\ref{ax:A2}(ii).

\medskip
\textbf{($\Leftarrow$)} Assume Axioms~\ref{ax:A1} and~\ref{ax:A2}.
By Lemma~\ref{lem:lambda}, for each $\mu\in(0,1)$ we have the level-affine form:
\[
p(\cdot\mid S;q)=a(S,\mu)+\lambda(S,\mu)\,q|_S \qquad \forall q\text{ with }q_S=\mu.
\]
Summing over $S$ gives, for any $q$ with $q_S=\mu$,
\[
1=\mathbf{1}^\top a(S,\mu)+\lambda(S,\mu)\mu
\quad\Longrightarrow\quad
\mathbf{1}^\top a(S,\mu)=1-\mu\lambda(S,\mu).
\]
By Axiom~\ref{ax:A2}(i), $1/\lambda(S,\mu)=\mu+\kappa(S)$ for some $\kappa(S)>0$, hence
\[
\mathbf{1}^\top a(S,\mu)=1-\frac{\mu}{\mu+\kappa(S)}=\frac{\kappa(S)}{\mu+\kappa(S)}>0.
\]

\medskip
\noindent\emph{Step 1: $a(S,\mu)\in\mathbb R^S_+$ and $\pi(S,\mu)$ is well-defined.}
Fix $\mu\in(0,1)$ and any $y\in S$. Choose an $x\in S$ with $x\neq y$ and consider an exposure $q$ with
$q_S=\mu$ and $q(\cdot\mid S)=\delta_x$ (equivalently $q|_S=\mu\,\delta_x$), where $\delta_x\in\Delta(S)$ is the degenerate
distribution on $x$. Then
\[
p(\cdot\mid S;q)=a(S,\mu)+\mu\lambda(S,\mu)\,\delta_x.
\]
In particular, for $y\neq x$, we have $p(y\mid S;q)=a(S,\mu)(y)\ge 0$. Since $y$ was arbitrary, this shows
$a(S,\mu)\ge 0$ componentwise. Together with $\mathbf 1^\top a(S,\mu)>0$, the normalized intercept
$\pi(S,\mu)\equiv a(S,\mu)/(\mathbf 1^\top a(S,\mu))$ is well-defined and lies in $\Delta(S)$.

\medskip
\noindent\emph{Step 2: Radial consistency implies intercept-direction stability.}
Fix $\mu_1,\mu_2\in(0,1)$ and write $a_i\equiv a(S,\mu_i)$ and $\lambda_i\equiv\lambda(S,\mu_i)$.
Pick $y_0\in S$ with $a_1(y_0)>0$ (exists because $a_1\ge 0$ and $\mathbf 1^\top a_1>0$).
Also pick $z\in S$ with $a_2(z)>0$.

Choose $x\in S\setminus\{y_0,z\}$ (possible since $|S|\ge 3$). For each $i=1,2$, take an exposure $q_i$ with
$q_{i,S}=\mu_i$ and $q_i(\cdot\mid S)=\delta_x$. Then
\[
p(\cdot\mid S;q_i)=a_i+\mu_i\lambda_i\,\delta_x.
\]
Since $y_0,z\neq x$, we have $p(y_0\mid S;q_i)=a_i(y_0)$ and $p(z\mid S;q_i)=a_i(z)$.
Applying Axiom~\ref{ax:A2}(ii) (radial consistency) to the common composition $v=\delta_x$ and the pair $(q_1,q_2)$ yields
\[
a_1(y_0)a_2(z)=a_1(z)a_2(y_0).
\]
The left-hand side is strictly positive, so $a_2(y_0)>0$. Define $t\equiv a_2(y_0)/a_1(y_0)>0$.

Now fix any $y\in S$. Choose $x\in S\setminus\{y,y_0\}$ (again possible since $|S|\ge 3$) and repeat the argument with
$v=\delta_x$ and coordinates $(y,y_0)$ to obtain
\[
a_1(y)a_2(y_0)=a_1(y_0)a_2(y),
\]
hence $a_2(y)=t\,a_1(y)$. Since $y$ was arbitrary, $a_2=t\,a_1$, which implies
\[
\pi(S,\mu_1)=\pi(S,\mu_2).
\]
Because $\mu_1,\mu_2$ were arbitrary, $\pi(S,\mu)$ is constant in $\mu$; denote the common element by $p_0\in\Delta(S)$.

\medskip
\noindent\emph{Step 3: Recover the AWLM form.}
By Step 2 and $\mathbf 1^\top a(S,\mu)=\kappa(S)/(\mu+\kappa(S))$, we have
\[
a(S,\mu)=(\mathbf 1^\top a(S,\mu))\,p_0=\frac{\kappa(S)}{\mu+\kappa(S)}\,p_0.
\]
Plugging into \eqref{eq:affine-level} with $\mu=q_S$ yields
\[
p(\cdot\mid S;q)
=\frac{\kappa(S)}{\kappa(S)+q_S}\,p_0+\frac{1}{\kappa(S)+q_S}\,q|_S
=\frac{\kappa(S)p_0+q|_S}{\kappa(S)+q_S}.
\]
Let $\alpha\equiv \frac{1}{1+\kappa(S)}\in(0,1)$ so that $\kappa(S)=(1-\alpha)/\alpha$, and multiply numerator and denominator by $\alpha$ to obtain
\[
p(\cdot\mid S;q)=\frac{(1-\alpha)p_0+\alpha q|_S}{(1-\alpha)+\alpha q_S},
\]
which is exactly the AWLM representation \eqref{eq:awlm-permenu}.
\end{proof}

\begin{remark}[Linking back to Figures~\ref{fig:controlled-collinearity}--\ref{fig:leverage}.]
Axiom~\ref{ax:A2}(i) links levels by imposing $1/\lambda(S,\mu)=\mu+\kappa(S)$. While this implies $\lambda(S,\mu)$ rises as $\mu$ falls (a normalization effect), the outcome-relevant multiplier for within-feasible composition is $\beta(S,\mu)=\mu\lambda(S,\mu)$, which falls as $\mu$ falls (aspirational dampening).
\end{remark}

\subsection{Cross-Menu Characterization}
\label{subsec:global-char}

Theorem~\ref{thm:awlm-permenu} characterizes AWLM behavior on a fixed feasible set $S$
with $|S|\ge 3$. It recovers (i) an \emph{intrinsic} choice distribution $p_0^S\in\Delta(S)$
and (ii) a leverage-line intercept $\kappa(S)>0$ from Axiom~\ref{ax:A2}(i).
It is convenient to reparameterize this intercept by the menu-level influence index
\[
\alpha_S \equiv    \frac{1}{1+\kappa(S)}\in(0,1)
\qquad\Longleftrightarrow\qquad
\kappa(S)=\frac{1-\alpha_S}{\alpha_S}.
\]
To obtain a global AWLM with a \emph{single} influence parameter and a \emph{single} Luce utility $u$
across menus, we impose two consistency requirements:
(i)~the recovered $\alpha_S$ is common across feasible sets, and
(ii)~the recovered intrinsic distributions are Luce-consistent across overlapping menus.

For each nonempty $S\subseteq\mathcal{X}$ with $|S|\ge 3$, suppose the per-menu characterization holds and yields some $(p_0^S,\alpha_S)$ such that
\begin{equation}
\label{eq:AWLM-intrinsic-family}
p(x \mid S; q) = \frac{(1-\alpha_S)p_0^S(x) + \alpha_S q(x)}{(1-\alpha_S) + \alpha_S q_S},
\qquad x \in S.
\end{equation}

\begin{definition}[Luce-consistent intrinsic family]
\label{def:Luce-intrinsic}
A family $\{p_0^S\}_{|S|\ge 3}$ is \emph{Luce-consistent} if for all nonempty $S,T\subseteq\mathcal{X}$
(with $|S|,|T|\ge 3$) and all $x,y\in S\cap T$,
\[
\frac{p_0^S(x)}{p_0^S(y)}=\frac{p_0^T(x)}{p_0^T(y)}.
\]
\end{definition}

By \citet{Luce_1959}'s representation theorem, Luce-consistency is equivalent to the existence of
$u:\mathcal{X}\to\mathbb{R}_{++}$ such that for every $S$ with $|S|\ge 3$ and every $x\in S$,
\[
p_0^S(x)=\frac{u(x)}{\sum_{z\in S}u(z)}.
\]

\begin{theorem}[Global AWLM Characterization (menus with $|S|\ge 3$)]
\label{thm:awlm-global}
A family $\{p(\cdot \mid S; q)\}_{S\subseteq\mathcal X,\ |S|\ge 3,\ q\in\Delta(\mathcal X)}$
is generated by an Aspiration-Weighted Luce Model with some $\alpha\in(0,1)$ and
$u:\mathcal X\to\mathbb R_{++}$ if and only if:
\begin{enumerate}[label=(\roman*)]
\item For each nonempty $S\subseteq\mathcal X$ with $|S|\ge 3$, the per-menu rule
$q\mapsto p(\cdot\mid S;q)$ satisfies Axioms~\ref{ax:A1} and~\ref{ax:A2}, and the recovered
menu-level influence indices $\alpha_S$ are constant across such $S$ (equivalently, $\kappa(S)$ is constant).
Denote the common value by $\alpha$; and
\item The induced intrinsic family $\{p_0^S\}_{|S|\ge 3}$ recovered from Theorem~\ref{thm:awlm-permenu}
is Luce-consistent on overlaps.
\end{enumerate}
In that case, for every $S$ with $|S|\ge 3$, every $q$, and every $x\in S$,
\[
p(x \mid S; q)=\frac{(1-\alpha)\frac{u(x)}{\sum_{z \in S} u(z)}+\alpha q(x)}{(1-\alpha)+\alpha q_S}.
\]
Moreover, once $(\alpha,u)$ are recovered from menus with $|S|\ge 3$, the same formula
uniquely extends the model to all nonempty menus.
\end{theorem}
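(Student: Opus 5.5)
The argument splits into necessity, sufficiency, and the extension claim. Most of the work has already been done by Theorem~\ref{thm:awlm-permenu}, together with the uniqueness of the per-menu parameters stated there.

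\emph{Necessity.} Suppose the family is generated by an AWLM with parameters $(\alpha,u)$. Fix any $S$ with $|S|\ge 3$. The per-menu rule has the form \eqref{eq:awlm-permenu} with $p_0=p_0(\cdot\mid S)=u|_S/\sum_{z\in S}u(z)$ and the same $\alpha$. The ($\Rightarrow$) direction of Theorem~\ref{thm:awlm-permenu} then gives Axioms~\ref{ax:A1} and~\ref{ax:A2}. By the uniqueness part of that theorem, the recovered parameters are exactly $\alpha_S=1/(1+\kappa(S))=\alpha$ and $p_0^S=\pi(S,\mu)=u|_S/\sum_{z\in S}u(z)$. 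To see this concretely:
\begin{itemize}
\item $\lambda(S,\mu)=\alpha/((1-\alpha)+\alpha\mu)$, so $\kappa(S)=(1-\alpha)/\alpha$ for every $S$;
\item $a(S,\mu)$ is proportional to $p_0(\cdot\mid S)$.
\end{itemize}
Hence $\alpha_S$ is constant across menus, which is condition (i). For any $x,y\in S\cap T$ the ratios satisfy $p_0^S(x)/p_0^S(y)=u(x)/u(y)=p_0^T(x)/p_0^T(y)$, which is condition (ii).

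\emph{Sufficiency.} Assume (i) and (ii). For each $S$ with $|S|\ge3$, the ($\Leftarrow$) direction of Theorem~\ref{thm:awlm-permenu} delivers \eqref{eq:AWLM-intrinsic-family} with $(p_0^S,\alpha_S)$. By (i) we may replace $\alpha_S$ by the common value $\alpha$.

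Next invoke Luce's representation theorem on the Luce-consistent family to obtain $u:\X\to\mathbb{R}_{++}$ with $p_0^S(x)=u(x)/\sum_{z\in S}u(z)$. This step needs care, and I expect it to be the main obstacle, because the family is indexed only by menus of size at least three. I would build $u$ directly, and first dispose of degenerate universes:
\begin{itemize}
\item If $|\X|<3$ there are no such menus and the statement is vacuous.
\item If $|\X|=3$ there is a single menu, and we set $u=p_0^{\X}$.
\end{itemize}
Otherwise, fix a reference alternative $x_0$. For each $x\neq x_0$, pick any menu $S\ni x,x_0$ with $|S|\ge3$ and set $u(x)=p_0^S(x)/p_0^S(x_0)$, with $u(x_0)=1$. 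Luce-consistency on overlaps makes this independent of the chosen $S$.

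It remains to check that this $u$ reproduces every $p_0^T$. For $x,y\in T$, Luce-consistency applied to the overlap of $T$ with a menu containing $\{x,y,x_0\}$ gives $p_0^T(x)/p_0^T(y)=u(x)/u(y)$. The chosen menu should be $T$ itself when $x_0\in T$, and otherwise a three-element menu $\{x,y,x_0\}$ or $T\cup\{x_0\}$. Normalizing these ratios over $T$ then yields $p_0^T(x)=u(x)/\sum_{z\in T}u(z)$.

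Substituting $\alpha$ and $u$ into \eqref{eq:AWLM-intrinsic-family} gives the displayed global formula. Uniqueness follows as well: the per-menu parameters are unique by Theorem~\ref{thm:awlm-permenu}, and $u$ is determined up to scale by the ratios, which is immaterial.

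\emph{Extension to all nonempty menus.} With $(\alpha,u)$ fixed, the right-hand side of \eqref{eq:awlm-closed} is well-defined for every nonempty $S$ and every $q$. This is because $p_0(\cdot\mid S)$ is defined from $u$ for any $S$, and the denominator satisfies $(1-\alpha)+\alpha q_S\ge 1-\alpha>0$. Hence the formula assigns exactly one choice rule to each remaining menu, which is the uniqueness claimed. It also agrees with the given family on menus with $|S|\ge3$ by construction. This part is definitional; the only substantive point is that $u$ is already pinned down, up to scale, by the large menus.
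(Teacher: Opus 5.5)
Your proposal is correct and follows the same route as the paper. For necessity, you plug the Luce distribution $u|_S/\sum_{z\in S}u(z)$ into the per-menu representation and read off $\kappa(S)=(1-\alpha)/\alpha$ for every $S$; for sufficiency, you combine Theorem~\ref{thm:awlm-permenu} with the common $\alpha$ and the Luce step. Your reference-alternative construction of $u$ simply makes explicit the Luce step the paper only cites, which is a sensible addition since only menus with $|S|\ge 3$ are available; like the paper's argument, it tacitly assumes each recovered $p_0^S$ has full support, so that the ratios in Definition~\ref{def:Luce-intrinsic} are well defined and $u$ is strictly positive.
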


\begin{remark}[Small menus]
\label{rem:small-menus}
For $|S|=1$, the formula gives $p(x|\{x\};q)=1$ for any $q$, which is trivially satisfied.
For $|S|=2$, the formula yields non-trivial predictions, but the per-menu axioms used in
Theorem~\ref{thm:awlm-permenu} (notably the three-point argument behind Lemma~\ref{lem:lambda})
do not apply. The global characterization therefore relies on menus with $|S|\ge 3$ to pin down
$(\alpha,u)$; binary menus provide consistency checks rather than identifying restrictions.
In contrast, the identification results in Section~\ref{sec:identification} apply whenever $|S|\ge 2$.
\end{remark}

\begin{proof}
\textbf{($\Rightarrow$)}
If the AWLM holds with parameters $(\alpha,u)$, then for every menu $S, |S|\geq 3$ the induced intrinsic distribution is Luce\footnote{And hence for all smaller menus by extension}:
\[
p_0^S(x)=\frac{u(x)}{\sum_{z\in S}u(z)}.
\]
Substituting this into \eqref{eq:AWLM-intrinsic-family} yields the AWLM formula on each $S$.
In particular, for every $S$ with $|S|\ge 3$ the per-menu rule satisfies Axioms~\ref{ax:A1}--\ref{ax:A2}
and the recovered influence parameter equals the common $\alpha$. Luce-consistency on overlaps follows immediately because odds ratios equal $u(x)/u(y)$.

\medskip
\textbf{($\Leftarrow$)}
Assume (i) and (ii). By (i), on each $S$ with $|S|\ge 3$ the per-menu characterization yields a
representation of the form \eqref{eq:AWLM-intrinsic-family} with a common $\alpha$ and some
intrinsic distribution $p_0^S\in\Delta(S)$.
By (ii) and Definition~\ref{def:Luce-intrinsic}, the family $\{p_0^S\}_{|S|\ge 3}$ is Luce-consistent on overlaps,
so there exists $u:\mathcal X\to\mathbb R_{++}$ such that $p_0^S(x)=u(x)/\sum_{z\in S}u(z)$ for all $|S|\ge 3$.
Substituting this into \eqref{eq:AWLM-intrinsic-family} gives the stated AWLM formula on all menus with $|S|\ge 3$.
\end{proof}

Per-menu, AWLM is pinned down by a fixed-$\mu$ restriction implying level-affine responses with a common multiplier $\lambda(S,\mu)$, and a cross-$\mu$ restriction that forces the level multipliers to satisfy the slope-one relation $1/\lambda(S,\mu)=\mu+\kappa(S)$ while the normalized intercept direction is invariant in $\mu$.

Globally, Luce-consistency ties the recovered intrinsic distributions across menus into a single utility $u$. (Once $(\alpha,u)$ are recovered from menus with $|S|\ge 3$, the same formula provides a unique extension to smaller menus.)

\section{Identification}
\label{sec:identification}
We have presented the AWLM as a mapping from primitives to observable choice probabilities. This section asks when and how these structural parameters can be recovered from observed stochastic choice data. Recovering $\alpha$ allows us to quantify influence strength across contexts, while recovering idiosyncratic preferences $p_0(\cdot\mid S)$ enables counterfactual analysis and policy evaluation.

This section establishes that the AWLM parameters $(\alpha, p_0)$ are point-identified from observed stochastic choice under minimal exposure variation. The key result is a simple linear identity: when the influencer's distribution shifts from $q_1$ to $q_2$, the induced change in feasible choices satisfies $\Delta = \alpha A$, where $\Delta$ and $A$ are observable. This provides constructive identification from as few as two exposure conditions and generates testable restrictions when additional exposures are available.

\begin{figure}[ht!]
  \centering
  \includegraphics[width=0.65\textwidth]{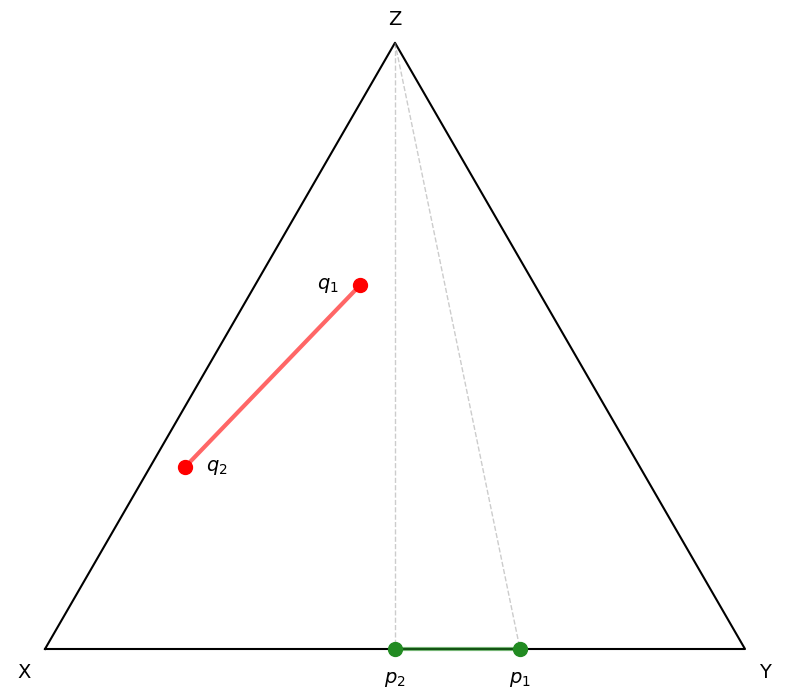}
  \caption{Identification Problem: The influencer shifts her exposure distribution (red segment), inducing a corresponding shift in the DM's feasible choices (green segment). The identification goal is to decode the unobservable influence strength from the geometric relationship between these observations.}
  \label{fig:id-geometry}
\end{figure}


\subsection{Two exposures on a fixed feasible set}
\label{subsec:two-exposures}
Fix a feasible set $S\subseteq \mathcal{X}$ with $|S| \geq 2$. For exposures $q_1, q_2 \in \Delta(I)$ with $I \supseteq S$, let $P_i \equiv p(\cdot \mid S; q_i, \alpha) \in \Delta(S)$ denote the induced choice distributions. Define the normalization factor $D_i(\alpha) \equiv (1-\alpha) + \alpha q_{i,S}$.

The AWLM representation \eqref{eq:awlm-closed} can be rearranged to yield the \emph{pre-normalization identity}:
\begin{equation}
D_i(\alpha) P_i(x) = (1-\alpha) p_0(x \mid S) + \alpha q_i(x), \qquad x \in S.
\label{eq:prenorm-main}
\end{equation}
Subtracting this identity across $i = 1, 2$ eliminates the idiosyncratic term:
\[
D_2(\alpha) P_2(x) - D_1(\alpha) P_1(x) = \alpha \bigl( q_2(x) - q_1(x) \bigr).
\]
To express this in terms of observables, define
\begin{equation}
\Delta \equiv P_2 - P_1, \qquad
A \equiv (q_2 - q_1)|_S + (1 - q_{2,S}) P_2 - (1 - q_{1,S}) P_1.
\label{eq:A-def}
\end{equation}

\begin{proposition}[Two-exposure identity]
\label{prop:two-exposure-identity}
Under AWLM, for any two exposures $q_1,q_2$ on $I\supseteq S$ we have
\begin{equation}
\Delta = \alpha A.
\label{eq:Delta-equals-alphaA}
\end{equation}
Conversely, suppose \eqref{eq:Delta-equals-alphaA} holds for some $\alpha\in(0,1)$. Define $p_0\in\mathbb{R}^S$ by
\[
(1-\alpha)p_0(x) \equiv    D_1(\alpha)P_1(x) - \alpha q_1(x),\qquad x\in S.
\]
Then \eqref{eq:prenorm-main} holds for $i=1$ by construction, and substituting $P_2=P_1+\Delta$ with $\Delta=\alpha A$ shows it also holds for $i=2$ with the same $p_0$. Moreover, if $p_0\in\Delta(S)$, it can be interpreted as the idiosyncratic distribution $p_0(\cdot\mid S)$.
\end{proposition}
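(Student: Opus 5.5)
The forward direction rests on one algebraic rewriting of the normalization factor. Starting from the pre-normalization identity \eqref{eq:prenorm-main} for $i=1,2$, I will write
\[
D_i(\alpha)=(1-\alpha)+\alpha q_{i,S}=1-\alpha(1-q_{i,S}),
\]
so that $D_i(\alpha)P_i(x)=P_i(x)-\alpha(1-q_{i,S})P_i(x)$. Subtracting the $i=1$ identity from the $i=2$ identity removes the common term $(1-\alpha)p_0(x\mid S)$ and gives
\[
P_2(x)-P_1(x)-\alpha\bigl[(1-q_{2,S})P_2(x)-(1-q_{1,S})P_1(x)\bigr]=\alpha\bigl(q_2(x)-q_1(x)\bigr).
\]
Moving the bracketed term to the right-hand side produces $\Delta(x)=\alpha A(x)$ coordinatewise on $S$, which is exactly \eqref{eq:Delta-equals-alphaA}. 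The point of this rewriting is that $D_i$ depends on the unknown $\alpha$, while $A$ is built only from observables $(P_i,q_i)$; splitting $D_i$ as $1-\alpha(1-q_{i,S})$ is what isolates $\alpha$ as a single scalar factor.

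For the converse, I take $\alpha\in(0,1)$ with $\Delta=\alpha A$ and define $p_0$ as in the statement. Then \eqref{eq:prenorm-main} holds for $i=1$ by construction. For $i=2$, I need to show
\[
D_2P_2-\alpha q_2=D_1P_1-\alpha q_1 \quad\text{on } S.
\]
Using $D_i=1-\alpha(1-q_{i,S})$, the difference of the two sides is
\[
(P_2-P_1)-\alpha\bigl[(1-q_{2,S})P_2-(1-q_{1,S})P_1\bigr]-\alpha(q_2-q_1)|_S=\Delta-\alpha A,
\]
and this vanishes by hypothesis. So the same $p_0$ satisfies the identity for $i=2$.

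It remains to check the normalization claim. Summing the $i=1$ identity over $S$ gives
\[
(1-\alpha)\mathbf 1^\top p_0=D_1-\alpha q_{1,S}=1-\alpha,
\]
so $p_0$ sums to one automatically. Only nonnegativity is an extra condition, which is why the statement adds the hypothesis that $p_0\in\Delta(S)$. Under that hypothesis, dividing \eqref{eq:prenorm-main} by $D_i>0$ recovers \eqref{eq:awlm-closed} for both exposures, which justifies interpreting $p_0$ as $p_0(\cdot\mid S)$.

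None of these steps is hard. The only place needing care is the converse, specifically that $A$ is defined from $P_2$ itself rather than from $P_1+\Delta$. I will therefore verify the $i=2$ identity directly as the difference $\Delta-\alpha A$, rather than substituting $P_2=P_1+\Delta$ inside $A$, which would create a circular-looking expression.
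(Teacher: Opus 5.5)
Your proposal is correct and follows the paper's proof. Both directions reduce to the identity $D_2P_2-\alpha q_2-(D_1P_1-\alpha q_1)=\Delta-\alpha A$ on $S$, which you reach a bit more directly by writing $D_i(\alpha)=1-\alpha(1-q_{i,S})$ instead of the paper's expand-and-compare, and your check that $\mathbf 1^\top p_0=1$ holds automatically (so only nonnegativity is needed) is a useful detail the paper leaves implicit.
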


\begin{proof}
See Appendix~\ref{app:proofs-identification}.
\end{proof}

The identity $\Delta = \alpha A$ states that the observable choice shift is proportional to an observable function of the exposure shift, with $\alpha$ as the proportionality constant. Proposition~\ref{prop:two-exposure-identity} immediately yields point identification of $\alpha$ whenever there is genuine variation in exposure.

\begin{corollary}[Identification of $\alpha$ from two exposures]
\label{cor:alpha-two}
Suppose AWLM holds on $S$ and consider two exposures $q_1,q_2$ on $I\supseteq S$ with corresponding choices $P_1,P_2\in\Delta(S)$. If $A(x)\neq 0$ for some $x\in S$, then
\begin{equation}
\alpha = \frac{\Delta(x)}{A(x)},
\label{eq:alpha-ratio}
\end{equation}
and this ratio is the same for all $x\in S$ with $A(x)\neq 0$.\footnote{Figure~\ref{fig:degenerate} in Appendix~\ref{app:figures} illustrates the identification geometry: Panel~A shows a design with $A\neq 0$ that identifies $(\alpha,p_0)$; Panel~B shows the knife-edge case where $A=0$ and $\alpha$ is unidentified; Panel~C depicts a near-degenerate design leading to weak identification.}
\end{corollary}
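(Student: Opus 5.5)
The plan is to read Corollary~\ref{cor:alpha-two} off Proposition~\ref{prop:two-exposure-identity} coordinatewise. Since AWLM holds on $S$, the vector identity $\Delta=\alpha A$ in \eqref{eq:Delta-equals-alphaA} holds in $\mathbb{R}^S$, so $\Delta(x)=\alpha A(x)$ for every $x\in S$. For any $x$ with $A(x)\neq 0$ we divide to get $\alpha=\Delta(x)/A(x)$. Because the left side is the single structural parameter, this ratio cannot depend on which such $x$ is chosen; that gives the ``same for all $x$'' clause.

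For completeness I would briefly recall why $\Delta=\alpha A$ holds, since the proof of the proposition sits in the appendix. Start from the pre-normalization identity \eqref{eq:prenorm-main} for $i=1,2$ and subtract to eliminate $(1-\alpha)p_0$. This gives
\[
D_2(\alpha)P_2-D_1(\alpha)P_1=\alpha(q_2-q_1)|_S .
\]
Next substitute $D_i(\alpha)=1-\alpha(1-q_{i,S})$, which turns the left side into
\[
(P_2-P_1)-\alpha\bigl[(1-q_{2,S})P_2-(1-q_{1,S})P_1\bigr].
\]
Moving the bracket to the right-hand side yields exactly $\Delta=\alpha A$, with $A$ as defined in \eqref{eq:A-def}.

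The only point needing care is that the conclusion uses only the observables $P_1,P_2,q_1,q_2$. This matters because $A$ is built from data alone, so the ratio identifies $\alpha$ without knowledge of $p_0$. Uniqueness is then automatic: any two AWLM parameters $\alpha,\alpha'$ rationalizing the same data would satisfy $(\alpha-\alpha')A(x)=0$, forcing $\alpha=\alpha'$ when $A(x)\neq0$. I do not expect a real obstacle here. The substantive work, including the degenerate case $A=0$ where $\alpha$ is not identified, lies in Proposition~\ref{prop:two-exposure-identity} and the appendix discussion rather than in this corollary.
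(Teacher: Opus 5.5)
Your proposal is correct and follows the paper's route: the corollary is read off coordinatewise from the identity $\Delta=\alpha A$ of Proposition~\ref{prop:two-exposure-identity}, where $A$ is built from observables alone. Your rederivation of that identity by substituting $D_i(\alpha)=1-\alpha(1-q_{i,S})$ is a more compact version of the appendix computation.
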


\begin{remark}[Special case: equal total mass on $S$]
\label{rem:qS-equal}
If $q_{1,S}=q_{2,S}=:q_S$, the normalization factors coincide and the identity simplifies. In this case, $A$ is collinear with $(q_2-q_1)|_S$, and substituting $\Delta=\alpha A$ yields
\[
\frac{P_2(x)-P_1(x)}{q_2(x)-q_1(x)} = \frac{\alpha}{(1-\alpha)+\alpha q_S} =:\lambda(S,q_S),
\]
independent of $x$. This gives the explicit formula $\alpha = \frac{\lambda(S,q_S)}{1+\lambda(S,q_S)(1-q_S)}$, recovering the proportional-difference condition from Corollary~\ref{cor:alpha-two} as a special case.
\end{remark}

Once $\alpha$ has been identified, the idiosyncratic Luce distribution $p_0(\cdot\mid S)$ is recovered by inverting the pre-normalization identity \eqref{eq:prenorm-main}.

\begin{proposition}[Identification of the idiosyncratic distribution on $S$]
\label{prop:idiosyncratic-id}
Suppose AWLM holds on $S$ and $\alpha\in(0,1)$ has been identified from two exposures $q_1,q_2$. For any exposure $q\in\Delta(\mathcal{X})$ define
\begin{equation}
p_0^S(x)
\equiv   
\frac{D(S,q;\alpha)p(x;S,q,\alpha) - \alpha q(x)}{1-\alpha},
\qquad x\in S.
\label{eq:P0-def}
\end{equation}
Then $p_0^S\in\Delta(S)$ is independent of the chosen exposure $q$ and coincides with the idiosyncratic Luce distribution:
\[
p_0^S(x) = p_0(x\mid S) = \frac{u(x)}{\sum_{y\in S}u(y)},\qquad x\in S.
\]
Moreover, the Luce weights $(u(x)){x \in S}$ are identified up to a common positive multiplicative factor. Recovering a single global utility $u: \mathcal{X} \to \mathbb{R}_{++}$ requires the cross-menu structure developed in Section~\ref{subsec:global-char}.
\end{proposition}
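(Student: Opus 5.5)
The plan is to invert the pre-normalization identity \eqref{eq:prenorm-main} exposure by exposure. Because AWLM holds on $S$ with some true parameters $(\alpha,u)$, and because Corollary~\ref{cor:alpha-two} guarantees that the identified $\alpha$ coincides with the true influence parameter, every exposure $q$ satisfies
\[
D(S,q;\alpha)\,p(x;S,q,\alpha)=(1-\alpha)\,p_0(x\mid S)+\alpha q(x),\qquad x\in S .
\]
Subtracting $\alpha q(x)$ and dividing by $1-\alpha>0$ shows that $p_0^S(x)$ as defined in \eqref{eq:P0-def} equals $p_0(x\mid S)=u(x)/\sum_{y\in S}u(y)$ for every $x\in S$.

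Independence of the chosen exposure is then automatic, since the right-hand side does not involve $q$. Membership in $\Delta(S)$ is also automatic, since the Luce distribution is strictly positive and sums to one. As a cross-check not needed for the logic, summing \eqref{eq:P0-def} over $S$ gives $\bigl(D-\alpha q_S\bigr)/(1-\alpha)=1$, because $\sum_{x\in S}p=1$ and $D=(1-\alpha)+\alpha q_S$. This confirms directly that $p_0^S$ always sums to one.

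For the Luce weights, I would argue as follows. The ratios $p_0^S(x)/p_0^S(y)=u(x)/u(y)$ are identified for all $x,y\in S$. Hence any $\tilde u$ that generates the same $p_0^S$ must satisfy $\tilde u(x)=c\,u(x)$ on $S$ with $c=\tilde u(y_0)/u(y_0)>0$ for a fixed $y_0\in S$. Conversely, every such rescaling leaves the Luce distribution on $S$ unchanged. So $(u(x))_{x\in S}$ is identified exactly up to a positive multiplicative constant.

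Finally, I would note that comparing scales across different menus requires the overlapping-menu Luce consistency of Theorem~\ref{thm:awlm-global}, which is outside the scope of this proposition.

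The only delicate point is the step that uses the identified $\alpha$ in place of the true one. It relies on Corollary~\ref{cor:alpha-two}, hence on the design condition $A\neq0$. I would state explicitly that the proposition presupposes this condition, and that with a wrong $\alpha$ the constructed vector $p_0^S$ would generally vary with $q$.
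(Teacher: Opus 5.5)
Your proposal is correct and follows essentially the same route as the paper: invert the pre-normalization identity \eqref{eq:prenorm-main} at an arbitrary exposure to obtain $p_0^S=p_0(\cdot\mid S)$, so independence of $q$ and membership in $\Delta(S)$ are immediate, and then observe that the Luce weights are pinned down only up to scale. Your treatment of the scale claim is slightly more complete than the paper's, since you show both that identified ratios force $\tilde u=c\,u$ on $S$ and that any such rescaling is harmless, whereas the paper records only the latter.
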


Figure~\ref{fig:identification-2d} illustrates the geometric intuition: the true $p_0$ is uniquely pinned down by requiring that the attempt targets $M_1,M_2$ satisfy the parallelism $M_2-M_1 \parallel q_2-q_1$.

\paragraph{Identification procedure.}
Given two exposure--choice pairs $(q_1,P_1)$ and $(q_2,P_2)$ on $S$: 
compute $\Delta=P_2-P_1$ and $A$ as in \eqref{eq:A-def}; if $A(x)\neq 0$ for some $x$, then $\alpha=\Delta(x)/A(x)$ (the ratio is constant across all such $x$); finally, recover $p_0(\cdot\mid S)$ from \eqref{eq:P0-def} using either exposure.

\begin{figure}[H]
  \centering
  \includegraphics[width=0.9\textwidth]{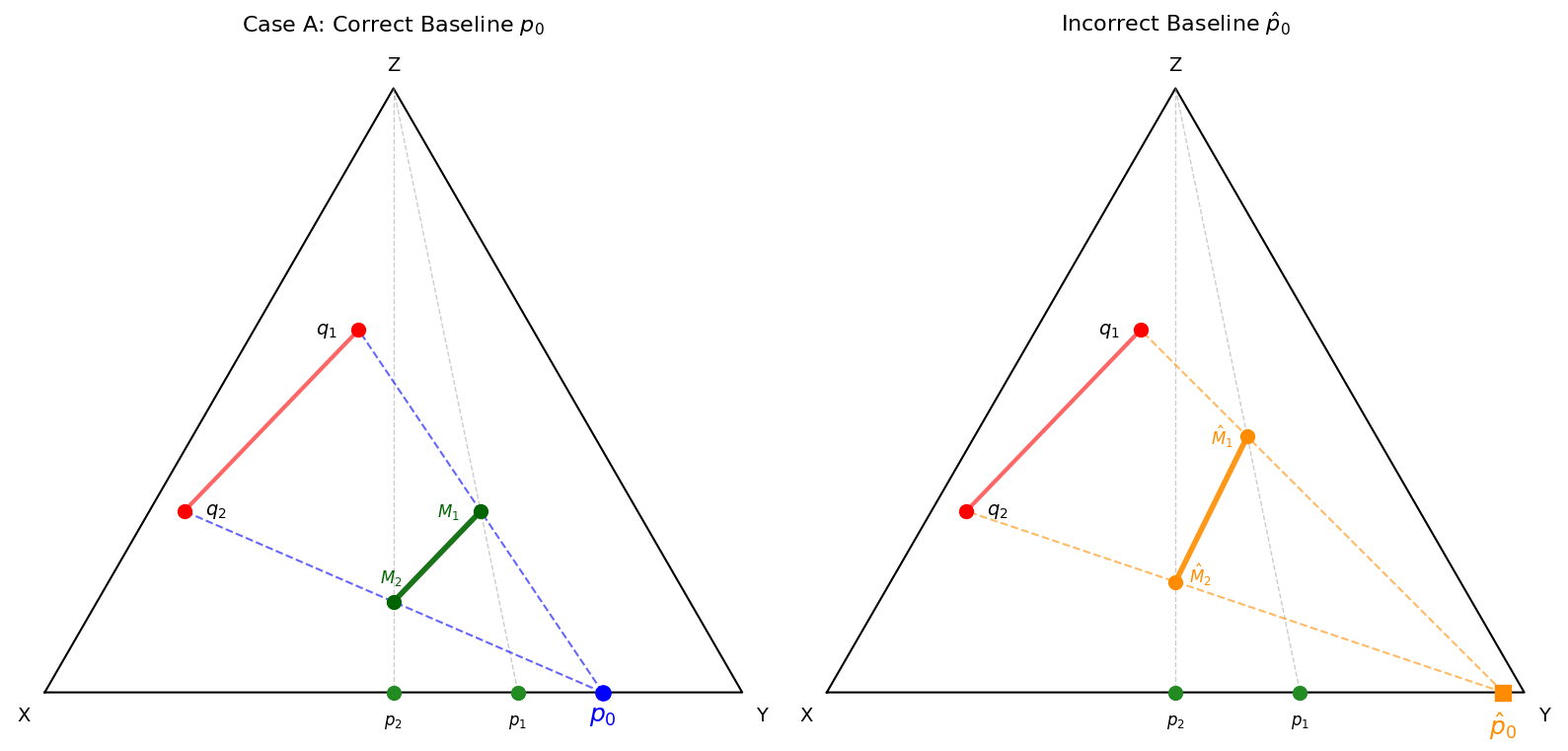}
  \caption{Geometric intuition for identification of $p_0(\cdot\mid S)$: Each attempt target $M_i$ lies at the intersection of two geometric objects: the chord from $p_0$ to $q_i$ (mixing) and the ray from $Z$ through $p_i$ (conditioning). The AWLM structure requires $M_2 - M_1 = \alpha(q_2-q_1)$, so the segment $M_1 M_2$ must be \textbf{parallel} to $q_1 q_2$.  Left panel: the true $p_0$ produces targets satisfying this parallelism. Right panel: an incorrect baseline $\hat{p}_0$ yields targets $\hat{M}_1, \hat{M}_2$ (orange) whose connecting segment is \emph{not} parallel to the exposure shift, violating the model's structure.}
  \label{fig:identification-2d}
\end{figure}

\subsection{Noisy shares: minimum distance and GMM}
\label{subsec:gmm-main}

The identity $\Delta=\alpha A$ is a population restriction. In samples, both $\widehat\Delta$ and $\widehat A$ are constructed from the same empirical shares, inducing mechanical correlation between the regressor and the regression error in a naive least-squares implementation. We therefore estimate $(\alpha,p_0)$ from the primitive pre-normalization moments \eqref{eq:prenorm-main}, which yields a standard minimum-distance/GMM procedure and delivers overidentification tests when multiple exposure regimes are observed. 

Fix a feasible set $S$ with $|S|=m\ge 2$ and exposures $\{q_i\}_{i=1}^K$ on $I\supseteq S$. For $\alpha\in(0,1)$ define $D_i(\alpha)\equiv   (1-\alpha)+\alpha q_{i,S}$, where $q_{i,S}\equiv   \sum_{x\in S}q_i(x)$. Motivated by \eqref{eq:prenorm-main}, define the $m$-dimensional moment
\[
m_i(\alpha,p_0)\equiv   D_i(\alpha)\hat P_i-(1-\alpha)p_0-\alpha\,q_i|_S\in\mathbb R^S.
\]
Under AWLM, $\mathbb E[m_i(\alpha_0,p_0)]=0$ for all $i$.

To avoid the adding-up singularity inherent in multinomial data, we drop one alternative and work in $\mathbb R^{m-1}$. Let $H$ be the selector matrix that keeps the first $m-1$ coordinates, and write $\tilde m_i(\alpha,\tilde p_0)\equiv   H m_i(\alpha,p_0)$. Stack $\tilde m(\theta)\equiv   (\tilde m_1^\top,\ldots,\tilde m_K^\top)^\top$ with $\theta\equiv   (\alpha,\tilde p_0)$, and define the GMM criterion
\[
Q_N(\theta;W)\equiv   N\,\tilde m(\theta)^\top W\,\tilde m(\theta),
\]
for a positive definite weight matrix $W$ (block-diagonal in $i$). The estimator is $\hat\theta\in\arg\min_{\theta}Q_N(\theta;W)$.

For fixed $\alpha$, the criterion is quadratic in $\tilde p_0$ and can be concentrated out. When $W=\mathrm{blkdiag}(W_1,\ldots,W_K)$,
\[
\hat{\tilde p}_0(\alpha)
=
\frac{1}{1-\alpha}\Big(\sum_{i=1}^K W_i\Big)^{-1}\sum_{i=1}^K W_i\,H\!\left(D_i(\alpha)\hat P_i-\alpha q_i|_S\right).
\]
We then minimize the one-dimensional concentrated objective $Q_N^c(\alpha;W)\equiv   Q_N((\alpha,\hat{\tilde p}_0(\alpha));W)$ over $\alpha\in(0,1)$.

A convenient one-step choice is $W=I$, which yields a simple minimum-distance estimator. For efficient estimation and inference, we use the optimal weight $W=\hat\Omega^{-1}$ based on the multinomial variance of $\hat P_i$ (see Appendix~\ref{app:gmm} for details) and report the standard $J$-test of overidentifying restrictions when $K(m-1)>m$.

\begin{remark}[Pairwise least squares as a diagnostic]
\label{rem:ls-diagnostic}
For a single pair of exposures, the identity $\Delta=\alpha A$ suggests a closed-form estimator $\hat\alpha=\langle\Delta,A\rangle/\langle A,A\rangle$. While this can be useful as a quick diagnostic check, it inherits the generated-regressor problem noted above and can be unstable when $A\approx 0$. The GMM approach avoids these issues by working directly with the pre-normalization moments.
\end{remark}

\begin{example}[Minimum distance with three exposures]
\label{ex:gmm-noisy}
Fix $S=\{x,y,z\}$ and suppose we observe three exposure regimes with feasible restrictions
\[
q_1|_S=(0.4,0.2,0.1),\quad q_2|_S=(0.3,0.1,0.2),\quad q_3|_S=(0.25,0.25,0.25),
\]
and feasible shares $(q_{1,S},q_{2,S},q_{3,S})=(0.7,0.6,0.75)$.
From $N_i=60$ independent observations in each regime we obtain empirical shares
\[
\hat P_1=\Big(\tfrac{19}{60},\tfrac{18}{60},\tfrac{23}{60}\Big),\quad
\hat P_2=\Big(\tfrac{18}{60},\tfrac{15}{60},\tfrac{27}{60}\Big),\quad
\hat P_3=\Big(\tfrac{15}{60},\tfrac{19}{60},\tfrac{26}{60}\Big).
\]

We estimate $(\alpha,p_0)$ by the one-step minimum-distance criterion with moments $m_i(\alpha,p_0)=D_i(\alpha)\hat P_i-(1-\alpha)p_0-\alpha\,q_i|_S$ and weight $W=I$. The minimizer is
\[
\hat\alpha\approx 0.41,\qquad \hat p_0\approx(0.206,\,0.300,\,0.494).
\]
With $K=3$ and $|S|=3$, the number of moment conditions is $K(|S|-1)=6$ while the parameter dimension is $|S|=3$, so the model is overidentified with $\mathrm{df}=3$. The same moments support a standard $J$-test once $W$ is replaced by the optimal multinomial weight.
\end{example}


\subsection{Generic identification \`a la McManus}
\label{subsec:mcmanus}

Propositions~\ref{prop:two-exposure-identity}--\ref{prop:idiosyncratic-id} provide
constructive, finite-sample identification on a fixed feasible set $S$:
a single nondegenerate pair of exposures already point-identifies the influence strength and the idiosyncratic distribution. It is nevertheless useful to ask whether identification
is robust from a design perspective, in the topological sense of \citet{mcmanus1992common}.
Fix a feasible set $S$ with $|S|=m\ge 2$ and an influencer menu $I\supsetneq S$.
View an exposure profile $q=(q_1,\ldots,q_K)\in\Delta(I)^K$ as the analyst's design choice.
For each design $q$, the AWLM implies a smooth map
\[
\Phi_q:(0,1)\times\Delta^{\circ}(S)\to(\Delta(S))^K
\quad\text{given by}\quad
\Phi_q(\alpha,p_0)\equiv   \big(P_1,\ldots,P_K\big),
\]
where $P_i(\cdot)=p(\cdot\mid S;q_i,\alpha)$ is the induced choice distribution under exposure $q_i$.

When $K\ge 3$, globally- and locally-identifying designs are generic:
there is an open dense set of exposure profiles for which $\Phi_q$ is injective
and has full column-rank Jacobian everywhere on $(0,1)\times\Delta^{\circ}(S)$.
Intuitively, failure of identification requires a knife-edge exposure design:
the within-$S$ restriction vectors $\{q_i|_S\}_{i=1}^K$ must move along a single affine line,
so that induced choices can collapse to a common point for some parameter values.
Such collinearity is destroyed by arbitrarily small perturbations of the exposure profile,
making identification robust in the generic sense.
A formal proof is provided in Appendix~\ref{app:proof-generic-id}.

\begin{proposition}[Generic identification with $K\ge 3$]
\label{prop:generic-McManus}
Fix a feasible set $S$ with $|S|=m\ge 2$, an influencer menu $I\supsetneq S$, and $K\ge 3$.
There exists an open dense set $\mathcal Q^\ast\subset\Delta(I)^K$ such that,
for every exposure profile $q=(q_1,\ldots,q_K)\in\mathcal Q^\ast$,
the map $\Phi_q$ is injective and its Jacobian with respect to $(\alpha,p_0)$
has full column rank $m$ at every $(\alpha,p_0)\in(0,1)\times\Delta^{\circ}(S)$.
In particular, for such generic designs, the AWLM parameters $(\alpha,p_0)$
are globally and locally identified from $(P_1,\ldots,P_K)$.
\end{proposition}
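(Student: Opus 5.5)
The plan is to reduce both injectivity and the Jacobian rank condition to a single condition on the design alone: the within-$S$ restriction vectors $v_i\equiv q_i|_S\in\mathbb R^S$, $i=1,\dots,K$, must not all lie on one affine line. I will take $\mathcal Q^\ast$ to be the set of profiles satisfying this non-collinearity.

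\textbf{Reparametrization.} I will set $c\equiv(1-\alpha)/\alpha\in(0,\infty)$. The map $\alpha\mapsto c$ is a diffeomorphism of $(0,1)$ onto $(0,\infty)$, so it preserves both injectivity and full column rank. Multiplying numerator and denominator of \eqref{eq:awlm-closed} by $1/\alpha$, and writing $s_i\equiv q_{i,S}$, the model becomes
\[
P_i=\frac{c\,p_0+v_i}{c+s_i},
\qquad\text{equivalently}\qquad
c\,p_0+v_i=(c+s_i)P_i .
\]
This is the pre-normalization identity \eqref{eq:prenorm-main} after dividing by $\alpha$.

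\textbf{Injectivity.} Suppose $(c,p_0)$ and $(c',p_0')$ induce the same $(P_1,\dots,P_K)$. Cross-multiplying the two expressions for each $P_i$ and cancelling the common term $s_iv_i$ gives
\[
(c'-c)\,v_i=c c'\,(p_0'-p_0)+s_i\,(c'p_0'-c\,p_0),\qquad i=1,\dots,K .
\]
There are two cases.
\begin{itemize}
\item If $c=c'$, the identity reduces to $c\,(c+s_i)(p_0'-p_0)=0$. Since $c>0$ and $c+s_i>0$, this gives $p_0=p_0'$.
\item If $c\neq c'$, then $v_i=a+s_i b$ for vectors $a,b$ that do not depend on $i$. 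Hence all $v_i$ lie on a common affine line, which contradicts $q\in\mathcal Q^\ast$.
\end{itemize}
So $\Phi_q$ is injective at every parameter value.

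\textbf{Full column rank.} I will run the same argument infinitesimally. Differentiate $c\,p_0+v_i=(c+s_i)P_i$ in a direction $(dc,dp_0)$ with $\mathbf 1^\top dp_0=0$, and impose $dP_i=0$ for all $i$. This gives
\[
dc\,(p_0-P_i)+c\,dp_0=0\quad\text{for all } i .
\]
Subtracting this across $i$ and $j$ gives $dc\,(P_i-P_j)=0$.
\begin{itemize}
\item If $dc\neq0$, then all $P_i$ equal a common $P$. Then $v_i=(c+s_i)P-c\,p_0$ is affine in $s_i$, so the $v_i$ are collinear, which is excluded.
\item If $dc=0$, the displayed equation gives $c\,dp_0=0$, so $dp_0=0$.
\end{itemize}
Thus the kernel of the Jacobian is trivial on the $m$-dimensional parameter space, and the Jacobian has full column rank $m$.

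\textbf{Genericity of $\mathcal Q^\ast$.} Collinearity of the $K$ points $v_i$ is a closed condition: all $2\times2$ minors of $[v_2-v_1,\dots,v_K-v_1]$ vanish. Its complement $\mathcal Q^\ast$ is therefore open. For density, I will use $I\supsetneq S$: every exposure can be perturbed so that its restrictions $q_i|_S$ move freely in an open subset of $\mathbb R^S$, and the feasible shares $s_i$ vary with them. With $K\ge3$, any collinear configuration can then be broken by an arbitrarily small perturbation of one $v_i$. This works even when $m=2$, since three generic points in $\mathbb R^2$ are not collinear.

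The step I expect to need the most care is the density argument near the boundary of $\Delta(I)$, for instance exposures with $q_i|_S=0$ or with coordinates of $v_i$ at zero. There I would perturb by mixing toward an interior point of $\Delta(I)$, which keeps the profile feasible while making the $v_i$ non-collinear.
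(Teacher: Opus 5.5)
Your proposal is correct and follows essentially the paper's route: the same generic set $\mathcal Q^\ast$ of profiles whose restriction vectors $v_i=q_i|_S$ are not collinear, density obtained by shifting mass between $I\setminus S$ and $S$ at an interior profile, and both injectivity and the rank condition reduced to a collinearity contradiction (your reparametrization $c=(1-\alpha)/\alpha$ and direct cross-multiplication simply replace the paper's detour through $\Delta_{ij}=\alpha A_{ij}$ and its separate ``no collapsed choices'' step). One small repair is needed in the density step: if all $v_i$ coincide, moving a single $v_i$ leaves two distinct points, which are still collinear, so you must perturb two of them; and mixing every $q_i$ toward one common interior point preserves collinearity, so use that mixing only to reach the interior.
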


\subsection{Multiple exposures and overidentification}
\label{subsec:multi-exposure}

Now suppose we observe $K\ge 2$ distinct exposures $\{q_i\}_{i=1}^K$ on $I\supseteq S$ with associated choice probabilities $P_i(\cdot)\in\Delta(S)$.
For each $i$, let $q_{i,S}\equiv   \sum_{x\in S}q_i(x)$ and $D_i(\alpha)=(1-\alpha)+\alpha q_{i,S}$, and for each pair $(i,j)$ define
\[
\Delta_{ij} \equiv    P_j - P_i,\qquad
A_{ij} \equiv   (q_j-q_i)\big|_S + (1-q_{j,S})P_j - (1-q_{i,S})P_i .
\]

\begin{proposition}[Finite-sample AWLM rationalizability]
\label{prop:finite-rationalizability}
A finite dataset $\{(q_i,P_i)\}_{i=1}^K$ on $S$ is generated by a common AWLM with some
$(\alpha,p_0(\cdot\mid S))\in(0,1)\times\Delta(S)$ if and only if there exists $\alpha\in(0,1)$ such that
\begin{equation}
\label{eq:pairwise-all}
\Delta_{ij}=P_j-P_i=\alpha A_{ij}\qquad \forall\,1\le i<j\le K,
\end{equation}
and, for some (equivalently every) $i$,
\[
b_i(\alpha):=D_i(\alpha)P_i-\alpha\,q_i|_S \in \mathbb R^S_+,
\qquad D_i(\alpha):=(1-\alpha)+\alpha q_{i,S}.
\]
In that case $b_1(\alpha)=\cdots=b_K(\alpha)=(1-\alpha)p_0(\cdot\mid S)$ and
$p_0(\cdot\mid S)$ is unique and equals $b_i(\alpha)/(1-\alpha)$.
\end{proposition}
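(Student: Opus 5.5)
The plan is to reduce everything to the pre-normalization identity \eqref{eq:prenorm-main}, exactly as in Proposition~\ref{prop:two-exposure-identity}, and to work with the vectors $b_i(\alpha)=D_i(\alpha)P_i-\alpha q_i|_S$.

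\textbf{Key algebraic fact.} For any $\alpha$ and any pair $(i,j)$,
\[
b_j(\alpha)-b_i(\alpha)=\Delta_{ij}-\alpha A_{ij}.
\]
To verify it, write $D_i(\alpha)P_i=P_i-\alpha(1-q_{i,S})P_i$. Then
\[
b_j-b_i=(P_j-P_i)-\alpha\bigl[(q_j-q_i)|_S+(1-q_{j,S})P_j-(1-q_{i,S})P_i\bigr],
\]
which is the claimed expression by the definitions of $\Delta_{ij}$ and $A_{ij}$. Consequently, the pairwise conditions \eqref{eq:pairwise-all} hold for a given $\alpha$ if and only if $b_1(\alpha)=\cdots=b_K(\alpha)$. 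This single identity drives both directions.

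\textbf{Necessity.} Suppose the data are generated by AWLM with $(\alpha,p_0)$. Then \eqref{eq:prenorm-main} gives $b_i(\alpha)=(1-\alpha)p_0$ for every $i$, so all the $b_i$ coincide. By the identity, \eqref{eq:pairwise-all} holds; this is Proposition~\ref{prop:two-exposure-identity} applied pairwise. Also $b_i(\alpha)=(1-\alpha)p_0\ge 0$ because $p_0\in\Delta(S)$.

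\textbf{Sufficiency.} Suppose $\alpha\in(0,1)$ satisfies \eqref{eq:pairwise-all}. By the identity the $b_i(\alpha)$ all equal a common vector $b$, so nonnegativity for some $i$ is equivalent to nonnegativity for every $i$. This justifies the ``some (equivalently every)'' clause. Define $p_0\equiv b/(1-\alpha)$. It remains to show $p_0\in\Delta(S)$. Sum $b_i$ over $S$, using $\mathbf 1^\top P_i=1$:
\[
\mathbf 1^\top b_i=D_i(\alpha)-\alpha q_{i,S}=1-\alpha,
\]
so $\mathbf 1^\top p_0=1$. Combined with $b\ge 0$, this gives $p_0\in\Delta(S)$. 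Rearranging $b=D_i(\alpha)P_i-\alpha q_i|_S=(1-\alpha)p_0$ and dividing by $D_i(\alpha)>0$ yields \eqref{eq:awlm-closed} for each $i$. So the data are generated by a common AWLM.

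\textbf{Uniqueness of $p_0$ given $\alpha$.} This is immediate, since any representing $p_0$ must satisfy $(1-\alpha)p_0=b_i(\alpha)$ by \eqref{eq:prenorm-main}.

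\textbf{Where the care is needed.} There is no real obstacle. The only points to handle carefully are the bookkeeping in the $b_j-b_i$ identity (the $(1-q_{i,S})P_i$ terms) and the adding-up step that places $p_0$ in the simplex. Note also that the statement asserts uniqueness of $p_0$ only given $\alpha$; uniqueness of $\alpha$ itself requires $A_{ij}\neq 0$ for some pair, as in Corollary~\ref{cor:alpha-two}.
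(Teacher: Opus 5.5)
Your proof is correct and takes essentially the same route as the paper's: both reduce the claim to the pre-normalization identity and to the constancy of $b_i(\alpha)$ across $i$, with $p_0=b_i(\alpha)/(1-\alpha)$. Your identity $b_j(\alpha)-b_i(\alpha)=\Delta_{ij}-\alpha A_{ij}$ and the adding-up computation $\mathbf 1^\top b_i(\alpha)=1-\alpha$ make explicit two steps the paper leaves implicit: in the ``if'' direction it simply assumes $b_i(\alpha)$ is constant, and it infers $p_0\in\Delta(S)$ from nonnegativity alone.
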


Proposition~\ref{prop:finite-rationalizability} shows that additional exposures beyond $K=2$ generate \emph{overidentifying restrictions}: each pair $(i,j)$ must yield the same $\alpha$ and the same idiosyncratic distribution vector $p_0^S$. This provides a natural falsification test for AWLM on a given menu.

\section{Conclusion}
\label{sec:conclusion}
The Aspiration-Weighted Luce Model addresses the question of how social influence operates when the influencer consumes from a richer menu than the follower. Our approach is predicated on a specific order of operations where influence is formed over the full menu and feasibility is imposed afterward through normalization. This mechanism generates a distinctive prediction of \emph{aspirational dampening}, where shifting the influencer's attention toward inaccessible alternatives weakens her effective impact on feasible consumption, even if her within-feasible behavior remains unchanged.

We provided a behavioral characterization that disentangles idiosyncratic Luce structure from directed influence, showing that the model's parameters are point-identified from as few as two exposure conditions. The underlying geometry is transparent, requiring that observed choice shifts be parallel to exposure shifts in a manner that uniquely pins down the latent baseline.

Future extensions might consider frameworks to include multiple influencers, endogenous exposure, or network structures. The core insight remains that, when aspirations exceed feasibility, the normalization step systematically reshapes choice in an empirically identifiable way.

\bibliography{Mybib}

\appendix
\section{Notation Glossary}
Throughout the paper, we use the following notation:
\begin{align*}
\X &\text{: a finite universe of alternatives},\\
S\subseteq \X &\text{: the decision maker's feasible set},\\
I\subseteq \X,\ S\subseteq I &\text{: the influencer's menu},\\
u:\X\to\mathbb R_{++} &\text{: Luce weights (idiosyncratic utilities)},\\
p_0(x\mid S)=\frac{u(x)}{\sum_{y\in S}u(y)} &\text{: idiosyncratic Luce choice on }S,\\
\rho^S(x)=p_0(x\mid S)\mathbf{1}\{x\in S\} &\text{: zero-extension of }p_0(\cdot\mid S)\text{ to }\X,\\
q\in\Delta(I) &\text{: exposure distribution (identified with its zero-extension to }\X\text{)},\\
q_S=\sum_{x\in S}q(x) &\text{: feasible share of exposure},\\
\alpha\in(0,1) &\text{: attempt-level influence parameter},\\
p(\cdot\mid S;q,\alpha)\in\Delta(S) &\text{: induced feasible choice under AWLM}.
\end{align*}

\section{Proofs}
\subsection{Proof for Subsection~\ref{subsec:comparative}}
\begin{proof}[Proof of Proposition~\ref{prop:dampening}]
Fix $S$, $p_0(\cdot\mid S)$, and $\alpha\in(0,1)$. Let $q,q'\in\Delta(\mathcal X)$ satisfy
$q(\cdot\mid S)=q'(\cdot\mid S)$ and $q_S,q'_S>0$.

By Equation~\ref{eq:beta-mixture}, for any exposure $\tilde q$ with $\tilde q_S>0$,
\[
p(\cdot\mid S;\tilde q,\alpha)
=\bigl(1-\beta(\tilde q_S)\bigr)p_0(\cdot\mid S)+\beta(\tilde q_S)\,\tilde q(\cdot\mid S).
\]
Apply this to $\tilde q=q$ and $\tilde q=q'$, and use $q(\cdot\mid S)=q'(\cdot\mid S)$:
\begin{align*}
p(\cdot\mid S;q',\alpha)-p(\cdot\mid S;q,\alpha)
&=\bigl(\beta(q'_S)-\beta(q_S)\bigr)\Big(q(\cdot\mid S)-p_0(\cdot\mid S)\Big),
\end{align*}
which is the stated identity.

Because $\beta(\cdot)$ is strictly increasing on $(0,1]$ when $\alpha\in(0,1)$,
$q'_S<q_S$ implies $\beta(q'_S)<\beta(q_S)$. Writing
\[
p(\cdot\mid S;q,\alpha)=(1-\beta(q_S))p_0+\beta(q_S)q(\cdot\mid S),
\qquad
p(\cdot\mid S;q',\alpha)=(1-\beta(q'_S))p_0+\beta(q'_S)q(\cdot\mid S),
\]
shows that $p(\cdot\mid S;q',\alpha)$ lies on the line segment between $p_0$ and $q(\cdot\mid S)$
and is closer to $p_0$ because it places the smaller weight on $q(\cdot\mid S)$.

The coordinate form follows immediately: for each $x\in S$,
\[
p(x\mid S;q,\alpha)-p_0(x\mid S)
=\beta(q_S)\Big(q(x\mid S)-p_0(x\mid S)\Big).
\]
\end{proof}

\subsection{Proofs for Section~\ref{sec:identification}}
\label{app:proofs-identification}
The AWLM formula \eqref{eq:awlm-closed} can be rearranged to isolate the idiosyncratic preferences. Multiplying both sides by the normalization factor $D_i(\alpha)$ yields the \emph{pre-normalization identity}
\begin{equation}
D_i(\alpha)\,P_i(x) = (1-\alpha)p_0(x\mid S) + \alpha q_i(x),
\qquad x\in S,\ i=1,2.
\label{eq:prenorm}
\end{equation}
This identity says that the rescaled choice probability is a linear combination of the idiosyncratic distribution and the influencer's distribution, a structure we can exploit by comparing two exposures.

Subtracting \eqref{eq:prenorm} for $i=1$ from $i=2$ eliminates the idiosyncratic term, leaving only the influence:
\[
D_2(\alpha)P_2(x) - D_1(\alpha)P_1(x) = \alpha\big(q_2(x)-q_1(x)\big),
\qquad x\in S.
\]
This difference equation is the engine of identification. To express it in terms of observables, define
\begin{equation}
\Delta \equiv    P_2 - P_1 \in\mathbb{R}^S,
\qquad
A \equiv    (q_2-q_1)\big|_S + (1-q_{2,S})P_2 - (1-q_{1,S})P_1 \in\mathbb{R}^S.
\end{equation}
The vector $\Delta$ is simply the change in observed choice probabilities. The vector $A$ may look complicated, but it has a natural interpretation: it adjusts the raw difference in influencer distributions $(q_2-q_1)|_S$ for the normalization effects captured by the terms $(1-q_{i,S})P_i$. When $q_{1,S} = q_{2,S}$, $A$ can be written as $A=(q_2 - q_1)|_S + (1-q_S)\Delta$.

\begin{proof}[Proof of Proposition~\ref{prop:two-exposure-identity}]
Starting from the pre-normalization identity \eqref{eq:prenorm} and using $D_i(\alpha)=(1-\alpha)+\alpha q_{i,S}$, we expand both sides:
\begin{align*}
D_2 P_2(x) - D_1 P_1(x)
&= \big(1-\alpha+\alpha q_{2,S}\big)P_2(x) - \big(1-\alpha+\alpha q_{1,S}\big)P_1(x)\\
&= (1-\alpha)\Delta(x) + \alpha\big(q_{2,S}P_2(x)-q_{1,S}P_1(x)\big).
\end{align*}
On the other hand, subtracting \eqref{eq:prenorm} for $i=1$ from $i=2$ gives 
$D_2 P_2(x) - D_1 P_1(x) = \alpha(q_2(x)-q_1(x))$. Equating and rearranging,
\[
(1-\alpha)\Delta(x) = \alpha\Big[(q_2(x)-q_1(x)) - q_{2,S}P_2(x) + q_{1,S}P_1(x)\Big].
\]
Expanding $\alpha A(x)$ using definition \eqref{eq:A-def}:
\begin{align*}
\alpha A(x)
&= \alpha(q_2-q_1)(x) + \alpha(1-q_{2,S})P_2(x) - \alpha(1-q_{1,S})P_1(x)\\
&= \alpha(q_2-q_1)(x) + \alpha\Delta(x) - \alpha\big(q_{2,S}P_2(x)-q_{1,S}P_1(x)\big).
\end{align*}
The right-hand sides coincide, yielding $(1-\alpha)\Delta(x) = \alpha A(x) - \alpha\Delta(x)$, 
so $\Delta(x) = \alpha A(x)$ for every $x\in S$.

For the converse, suppose \eqref{eq:Delta-equals-alphaA} holds for some $\alpha\in(0,1)$. 
Define $p_0(x) \equiv [D_1(\alpha)P_1(x) - \alpha q_1(x)]/(1-\alpha)$ for $x\in S$. 
Then \eqref{eq:prenorm} holds for $i=1$ by construction. 
Substituting $P_2=P_1+\Delta$ with $\Delta=\alpha A$ and using the definition of $A$ shows 
that \eqref{eq:prenorm} also holds for $i=2$ with the same $p_0$.
\end{proof}

\begin{proof}[Proof of Proposition~\ref{prop:idiosyncratic-id}]
From \eqref{eq:prenorm}, for each exposure $q$ we have
\[
D(S,q;\alpha)p(x;S,q,\alpha) - \alpha q(x) = (1-\alpha)p_0(x\mid S),
\]
so \eqref{eq:P0-def} yields $p_0^S=p_0(\cdot\mid S)$, which is independent of $q$. 
It is immediate that $p_0^S\in\Delta(S)$. By Luce's representation theorem, 
there exists $u:\mathcal{X}\to\mathbb{R}_{++}$ such that $p_0^S(x)=u(x)/\sum_{y\in S}u(y)$; 
multiplying $u$ by any positive constant leaves $p_0^S$ unchanged, so $u$ is identified up to scale.
\end{proof}

\begin{proof}[Proof of Proposition~\ref{prop:finite-rationalizability}]
The ``only if'' direction follows by applying Proposition~\ref{prop:two-exposure-identity} 
to every pair $(i,j)$ and noting that \eqref{eq:prenorm} implies 
$b_i(\alpha)=(1-\alpha)p_0(\cdot\mid S)$ for all $i$. 

For the ``if'' direction, suppose $\Delta_{ij}=\alpha A_{ij}$ for all pairs and 
$b_i(\alpha)$ is constant across $i$. Define $p_0^S \equiv b_i(\alpha)/(1-\alpha)$; 
the nonnegativity condition ensures $p_0^S\in \Delta(S)$. Then for each $i$,
\[
D_i(\alpha)P_i = (1-\alpha)p_0^S + \alpha q_i|_S,
\]
so the AWLM representation holds on $S$ with idiosyncratic distribution $p_0(\cdot\mid S)=p_0^S$. 
Uniqueness follows from \eqref{eq:prenorm}.
\end{proof}

\subsection{Proof of Proposition~\ref{prop:generic-McManus}}
\label{app:proof-generic-id}

Fix a nonempty feasible set $S\subseteq\mathcal X$ with $|S|=m\ge 2$,
an influencer menu $I\supsetneq S$, and an exposure profile
$q=(q_1,\ldots,q_K)\in\Delta(I)^K$ with $K\ge 3$.
For each $i$, write
\[
v_i\equiv  q_i|_S\in\mathbb R^m_+,
\qquad
s_i\equiv  q_{iS}=\sum_{x\in S}q_i(x)=\mathbf 1^\top v_i\in[0,1].
\]
For parameters $(\alpha,p_0)\in(0,1)\times\Delta^{\circ}(S)$, AWLM implies that the induced choice under exposure $q_i$ satisfies, for each $x\in S$,
\[
P_i(x)=\frac{(1-\alpha)p_0(x)+\alpha v_i(x)}{D_i(\alpha)},
\qquad
D_i(\alpha)\equiv  (1-\alpha)+\alpha s_i,
\]
and we define the stacked map $\Phi_q(\alpha,p_0)\equiv  (P_1,\ldots,P_K)$.

\medskip
\noindent\text{Step 1: An open dense set of non-collinear designs.}
Let $\mathcal Q^\ast\subset\Delta(I)^K$ be the set of exposure profiles
$q=(q_1,\ldots,q_K)$ such that the restriction vectors $\{v_i\}_{i=1}^K$
do not all lie on a common affine line in $\mathbb R^m$; equivalently,
there exist indices $i,j,k$ such that $v_j-v_i$ and $v_k-v_i$ are linearly independent.
This condition can be witnessed by a strict non-vanishing of some $2\times 2$ minor,
hence $\mathcal Q^\ast$ is open in the relative topology of $\Delta(I)^K$.

To see density, fix any $q\in\Delta(I)^K$ and any neighborhood of $q$.
Approximate $q$ by an interior profile (all coordinates strictly positive), which is possible
since the relative interior is dense. If the approximating profile lies in $\mathcal Q^\ast$
we are done. Otherwise, all $\{v_i\}$ lie on a common affine line.
Pick indices $1,2$ with $v_2\neq v_1$ if possible (if all $v_i$ are equal,
any small perturbation works). Let $d\equiv  v_2-v_1\neq 0$ denote the line direction.
Choose an alternative $x\in S$ such that $e_x$ is not parallel to $d$
(this is always possible because $m\ge 2$).
Let $z\in I\setminus S$ be any infeasible alternative (nonempty by $I\supsetneq S$).
For sufficiently small $\varepsilon>0$, define a perturbed exposure $q_3'\in\Delta(I)$ by
\[
q_3'(x)\equiv  q_3(x)+\varepsilon,\qquad
q_3'(z)\equiv  q_3(z)-\varepsilon,
\qquad
q_3'(y)\equiv  q_3(y)\ \text{for all other }y\in I.
\]
Because $q_3$ is interior, choosing $\varepsilon$ small ensures $q_3'\in\Delta(I)$.
This perturbation changes $v_3$ by $v_3'\equiv  v_3+\varepsilon e_x$,
which moves $v_3$ in a direction not parallel to $d$, hence breaks affine collinearity.
Keeping all other $q_i$ unchanged yields an exposure profile $q'$ arbitrarily close to $q$
with $q'\in\mathcal Q^\ast$. Therefore $\mathcal Q^\ast$ is dense.

\medskip
\noindent\text{Step 2: Designs in $\mathcal Q^\ast$ rule out collapsed induced choices.}
We show that if there exists $(\alpha,p_0)$ such that
\[
P_1=\cdots=P_K=:P\in\Delta(S),
\]
then the restriction vectors $\{v_i\}$ must lie on a common affine line.

Using the pre-normalization identity $D_i(\alpha)P=(1-\alpha)p_0+\alpha v_i$,
take differences across $i$ and $j$:
\[
\bigl(D_i(\alpha)-D_j(\alpha)\bigr)P=\alpha(v_i-v_j).
\]
Since $D_i(\alpha)-D_j(\alpha)=\alpha(s_i-s_j)$, cancel $\alpha>0$ to obtain
\[
v_i-v_j=(s_i-s_j)P\qquad\text{for all }i,j.
\]
Thus $v_i=v_1+(s_i-s_1)P$ for all $i$, i.e., $\{v_i\}$ lie on the affine line
$\{v_1+tP:t\in\mathbb R\}$. Consequently, if $q\in\mathcal Q^\ast$,
there is no parameter value $(\alpha,p_0)$ for which $P_1=\cdots=P_K$.

\medskip
\noindent\text{Step 3: Full-rank Jacobian for $q\in\mathcal Q^\ast$.}
Fix $q\in\mathcal Q^\ast$. We show that the Jacobian of $\Phi_q$ with respect to
$(\alpha,p_0)$ has full column rank $m$ at every $(\alpha,p_0)\in(0,1)\times\Delta^{\circ}(S)$.

Let $N_i\equiv  (1-\alpha)p_0+\alpha v_i$, so $P_i=N_i/D_i(\alpha)$.
The tangent space to $\Delta^{\circ}(S)$ at $p_0$ is
\[
T_{p_0}\Delta^{\circ}(S)=\{\delta\in\mathbb R^m:\mathbf 1^\top \delta=0\}.
\]
For any $\delta\in T_{p_0}\Delta^{\circ}(S)$,
\[
D_{p_0}P_i[\delta]=\frac{1-\alpha}{D_i(\alpha)}\,\delta.
\]
Hence the $p_0$-directions contribute an $(m-1)$-dimensional subspace in the stacked output.

Differentiating with respect to $\alpha$ gives (quotient rule, using $N_i=D_iP_i$):
\[
\frac{\partial P_i}{\partial\alpha}
=\frac{v_i-p_0+(1-s_i)P_i}{D_i(\alpha)}.
\]
If the Jacobian failed to have full rank $m$ at some $(\alpha,p_0)$,
then (since the $p_0$-directions already contribute rank $m-1$)
the $\alpha$-direction would lie in their span. Thus there would exist
$\delta\in T_{p_0}\Delta^{\circ}(S)$ such that for every $i$,
\[
\frac{\partial P_i}{\partial\alpha}
=\frac{1-\alpha}{D_i(\alpha)}\,\delta.
\]
Multiplying by $D_i(\alpha)$ yields
\[
v_i-p_0+(1-s_i)P_i=(1-\alpha)\delta\qquad\text{for all }i,
\]
so the left-hand side must be independent of $i$. In particular, for any $i\neq j$,
\[
(v_j-v_i)+(1-s_j)P_j-(1-s_i)P_i=0.
\]
But under AWLM, Proposition~\ref{prop:two-exposure-identity} implies for every pair $(i,j)$,
\[
\Delta_{ij}\equiv  P_j-P_i=\alpha A_{ij},
\qquad
A_{ij}\equiv  (v_j-v_i)+(1-s_j)P_j-(1-s_i)P_i.
\]
Therefore $A_{ij}=0$ implies $\Delta_{ij}=0$, i.e.\ $P_i=P_j$ for all pairs, hence
$P_1=\cdots=P_K$. This contradicts Step~2 for $q\in\mathcal Q^\ast$.
Thus the Jacobian has full column rank $m$ everywhere.

\medskip
\noindent\text{Step 4: Global injectivity for $q\in\mathcal Q^\ast$.}
Fix $q\in\mathcal Q^\ast$ and suppose
\[
\Phi_q(\alpha,p_0)=\Phi_q(\alpha',p_0')=(P_1,\ldots,P_K).
\]
By Step~2, the induced choices cannot all coincide across exposures,
so there exist $i\neq j$ with $P_i\neq P_j$, hence $\Delta_{ij}\neq 0$.
By Proposition~\ref{prop:two-exposure-identity}, we have simultaneously
\[
\Delta_{ij}=\alpha A_{ij}
\quad\text{and}\quad
\Delta_{ij}=\alpha' A_{ij},
\]
so $\alpha=\alpha'$. With $\alpha$ fixed, the pre-normalization identity gives
\[
(1-\alpha)p_0=D_i(\alpha)P_i-\alpha v_i
=(1-\alpha)p_0'
\]
for any $i$, hence $p_0=p_0'$. This proves injectivity.

\medskip
\noindent\text{Conclusion.}
Steps~1--4 show that $\mathcal Q^\ast$ is open and dense and that for every $q\in\mathcal Q^\ast$,
the map $\Phi_q$ is injective and has full column-rank Jacobian everywhere.
This proves Proposition~\ref{prop:generic-McManus}. \qed

\section{Econometric Details (GMM derivation, Consistency)}
\label{app:gmm}

This appendix develops a two-step GMM estimator for the AWLM parameters on a fixed feasible set
$S\subseteq\mathcal X$ with $|S|=m\ge 2$. The moment conditions are based on the pre-normalization identity
\eqref{eq:prenorm}.

\subsection{Data and moment conditions}

Fix $S$ and suppose we observe $K$ exposure conditions $\{q_i\}_{i=1}^K$, where each $q_i$ is a probability
distribution on $\mathcal X$. The estimation only uses $q_{i,S}\equiv  \sum_{x\in S}q_i(x)$ and the restriction
$q_i|_S\in\mathbb R^S$.

Under exposure $q_i$, we observe $N_i$ independent choices from $S$. Let $\hat P_i\in\Delta(S)$ denote the
empirical choice shares (multinomial frequencies), with $\mathbb E[\hat P_i]=P_i$, where $P_i$ is the true
choice distribution induced by AWLM under $q_i$.

For $\alpha\in(0,1)$ define the normalization factor
\[
D_i(\alpha)\equiv   (1-\alpha)+\alpha q_{i,S}.
\]
Let $p_0\in\Delta(S)$ denote the (menu-$S$) idiosyncratic distribution. Motivated by \eqref{eq:prenorm}, define the
$ m$-dimensional sample moment vector
\[
m_i(\alpha,p_0)\equiv   D_i(\alpha)\hat P_i-(1-\alpha)p_0-\alpha\,q_i|_S\in\mathbb R^{m}.
\]
Under AWLM, the moment condition satisfies $\mathbb E[m_i(\alpha_0,p_0)]=0$ for each $i$.

\subsection{Handling the adding-up singularity via dropping one alternative}

Because $\hat P_i$ is multinomial, $\mathrm{Var}(\hat P_i)$ is singular due to the adding-up constraint.
Let $H$ be the $(m-1)\times m$ selector matrix that drops the last alternative (after indexing $S=\{1,\dots,m\}$),
\[
H\equiv   \begin{pmatrix} I_{m-1} & 0_{(m-1)\times 1}\end{pmatrix}.
\]
Define the reduced moments and reduced idiosyncratic parameter
\[
\tilde m_i(\alpha,\tilde p_0)\equiv   H\,m_i(\alpha,p_0)\in\mathbb R^{m-1},
\qquad
\tilde p_0\equiv   H p_0\in\mathbb R^{m-1}.
\]
Equivalently,
\[
\tilde m_i(\alpha,\tilde p_0)=H\!\left(D_i(\alpha)\hat P_i-\alpha q_i|_S\right)-(1-\alpha)\tilde p_0.
\]
Stack $\tilde m(\theta)\equiv   (\tilde m_1(\theta)^\top,\dots,\tilde m_K(\theta)^\top)^\top\in\mathbb R^{K(m-1)}$
where $\theta\equiv   (\alpha,\tilde p_0)\in(0,1)\times\mathbb R^{m-1}$.

For multinomial sampling,
\[
\mathrm{Var}(\hat P_i)=\frac{1}{N_i}\Sigma(P_i),
\qquad
\Sigma(P_i)\equiv   \diag(P_i)-P_iP_i^\top,
\]
so
\[
\mathrm{Var}(\tilde m_i(\theta_0))=\frac{D_i(\alpha_0)^2}{N_i}\,\tilde\Sigma_i,
\qquad
\tilde\Sigma_i\equiv   H\Sigma(P_i)H^\top.
\]
If $P_i(x)\in(0,1)$ for all $x\in S$, then $\tilde\Sigma_i$ is positive definite. In finite samples, if some
$\hat P_i(x)=0$, one may add a small ridge to $\tilde\Sigma(\hat P_i)$ before inversion.

\subsection{Scaling and the optimal weighting matrix}

Let $N\equiv   \sum_{i=1}^K N_i$ and $\pi_i\equiv   N_i/N$. Under the multinomial model,
\[
\mathrm{Var}\!\big(\sqrt N\,\tilde m_i(\theta_0)\big)=\frac{D_i(\alpha_0)^2}{\pi_i}\,\tilde\Sigma_i.
\]
Hence define
\[
\Omega\equiv   \mathrm{Var}\!\big(\sqrt N\,\tilde m(\theta_0)\big)
=\blkdiag\!\left(\frac{D_1(\alpha_0)^2}{\pi_1}\tilde\Sigma_1,\dots,\frac{D_K(\alpha_0)^2}{\pi_K}\tilde\Sigma_K\right).
\]
A feasible plug-in estimator $\hat\Omega$ replaces $(\alpha_0,P_i,\pi_i)$ with $(\tilde\alpha,\hat P_i,\hat\pi_i)$ where
$\hat\pi_i\equiv   N_i/N$, and uses $\tilde\Sigma(\hat P_i)\equiv   H\Sigma(\hat P_i)H^\top$.

\subsection{Two-step GMM and concentrating out the idiosyncratic}

For any positive definite weight matrix $W$ (typically block-diagonal), define the GMM criterion
\[
Q_N(\theta;W)\equiv   N\,\tilde m(\theta)^\top W\,\tilde m(\theta)
=\big(\sqrt N\,\tilde m(\theta)\big)^\top W\,\big(\sqrt N\,\tilde m(\theta)\big).
\]
A two-step estimator is obtained as follows:
\begin{enumerate}[label=\textup{(\roman*)}, leftmargin=2em]
\item \text{First step:} choose a convenient preliminary weight $W_0$ (e.g., $W_0=I$ or $W_0=\blkdiag(\pi_i I_{m-1})$)
and compute $\tilde\theta\equiv   \arg\min_\theta Q_N(\theta;W_0)$.
\item \text{Second step:} form $\hat\Omega$ using $(\tilde\alpha,\hat P_i,\hat\pi_i)$ and compute
\[
\hat\theta\equiv   \arg\min_\theta Q_N(\theta;\hat\Omega^{-1}).
\]
\end{enumerate}

\paragraph{Concentrating out $\tilde p_0$.}
Suppose $W$ is block-diagonal, $W=\blkdiag(W_1,\dots,W_K)$ with each $W_i\in\mathbb R^{(m-1)\times(m-1)}$.
For fixed $\alpha$, the criterion is quadratic in $\tilde p_0$ and admits the closed-form minimizer
\begin{equation}
\label{eq:tildep0-gls}
\hat{\tilde p}_0(\alpha)
=
\frac{1}{1-\alpha}\left(\sum_{i=1}^K W_i\right)^{-1}\sum_{i=1}^K W_i\,H\!\left(D_i(\alpha)\hat P_i-\alpha q_i|_S\right).
\end{equation}
This is the GLS-style weighted average. In the one-step case $W_i\propto I_{m-1}$, it reduces to the simple average
$\hat{\tilde p}_0(\alpha)\propto \sum_i H(D_i(\alpha)\hat P_i-\alpha q_i|_S)$.

One can therefore minimize the concentrated criterion
\[
Q_N^c(\alpha;W)\equiv   Q_N\big((\alpha,\hat{\tilde p}_0(\alpha));W\big)
\]
over $\alpha\in(0,1)$ (a one-dimensional search), and then recover $\hat{\tilde p}_0=\hat{\tilde p}_0(\hat\alpha)$.

Finally, recover the full idiosyncratic vector $\hat p_0\in\Delta(S)$ by setting its first $m-1$ coordinates to $\hat{\tilde p}_0$
and the last coordinate to $1-\mathbf 1^\top\hat{\tilde p}_0$ (interior idiosyncratic distribution ensures positivity asymptotically).

\subsection{Asymptotic distribution and standard errors}

Define $g(\theta)\equiv   \mathbb E[\tilde m(\theta)]$ and the Jacobian
\[
G\equiv   \left.\frac{\partial g(\theta)}{\partial\theta^\top}\right|_{\theta=\theta_0}.
\]
(Importantly, $G$ is defined from the \emph{unscaled} mean function $g(\theta)$; the $\sqrt N$ scaling is already absorbed into
$\Omega=\mathrm{Var}(\sqrt N\,\tilde m(\theta_0))$.)

Under standard regularity conditions and with the optimal weight $W=\Omega^{-1}$,
\[
\sqrt N(\hat\theta-\theta_0)\ \Rightarrow\ \mathcal N\!\left(0,\ (G^\top\Omega^{-1}G)^{-1}\right).
\]
A feasible variance estimator uses plug-in $\hat G$ and $\hat\Omega$.

\paragraph{Closed-form $G$.}
Let $P_i\equiv   \mathbb E[\hat P_i]$ denote the true choice probabilities under exposure $q_i$.
For each $i$, the block derivatives of $\mathbb E[\tilde m_i(\theta)]$ are
\[
\frac{\partial\,\mathbb E[\tilde m_i(\theta)]}{\partial \tilde p_0^\top}
=-(1-\alpha)\,I_{m-1},
\]
and (using $D_i'(\alpha)=q_{i,S}-1$)
\[
\frac{\partial\,\mathbb E[\tilde m_i(\theta)]}{\partial \alpha}
=
H\!\Big((q_{i,S}-1)P_i+p_0-q_i|_S\Big).
\]
Evaluating at $\theta_0$ and stacking $i=1,\dots,K$ yields $G\in\mathbb R^{K(m-1)\times m}$.
A feasible $\hat G$ replaces $(P_i,p_0,\alpha_0)$ with $(\hat P_i,\hat p_0,\hat\alpha)$.

\subsection{Overidentification test}

Let $d\equiv   K(m-1)$ be the number of moment conditions and $p\equiv   m$ the number of parameters in $\theta=(\alpha,\tilde p_0)$.
The model is overidentified if and only if $d>p$, i.e.,
\[
K(m-1)>m.
\]
With the two-step weight $W=\hat\Omega^{-1}$, the standard $J$ statistic is
\[
J\equiv   Q_N(\hat\theta;\hat\Omega^{-1})=N\,\tilde m(\hat\theta)^\top\hat\Omega^{-1}\tilde m(\hat\theta),
\]
and under correct specification,
\[
J\ \Rightarrow\ \chi^2_{\mathrm{df}},
\qquad
\mathrm{df}=d-p=K(m-1)-m.
\]
(When $\mathrm{df}=0$, the model is just-identified and $J\equiv   0$ asymptotically.)

\subsection{Least-squares estimation}
\label{subsec:ls-estimation}

In applications, observed choice frequencies will not satisfy the exact identities $\Delta=\alpha A$ and $b_i(\alpha)$ constant due to sampling noise and model misspecification. The two-exposure identity nevertheless suggests simple least-squares estimators.

For a single pair $(i,j)$, define $\Delta_{ij}$ and $A_{ij}$ as above. The AWLM restriction $\Delta_{ij}=\alpha A_{ij}$ is a linear regression of $\Delta_{ij}$ on $A_{ij}$ through the origin. The least-squares estimator minimizes $\|\Delta_{ij}-\alpha A_{ij}\|^2$ and has the closed form
\begin{equation}
\hat\alpha_{ij}
=
\frac{\langle \Delta_{ij}, A_{ij} \rangle}{\langle A_{ij}, A_{ij}\rangle},
\label{eq:alpha-ls-pair}
\end{equation}
whenever $A_{ij}\neq 0$, where $\langle\cdot,\cdot\rangle$ denotes the Euclidean inner product on $\mathbb{R}^S$.

With $K\ge 2$ exposures, we can pool all pairs and choose $\hat\alpha$ to minimize the aggregate discrepancy
\[
\sum_{1\le i<j\le K} \|\Delta_{ij}-\alpha A_{ij}\|^2.
\]
This again yields a closed-form estimator obtained by stacking the regressors $A_{ij}$ and responses $\Delta_{ij}$:
\begin{equation}
\hat\alpha
=
\frac{\sum_{1\le i<j\le K} \langle \Delta_{ij}, A_{ij}\rangle}
     {\sum_{1\le i<j\le K} \langle A_{ij}, A_{ij}\rangle}.
\label{eq:alpha-ls-multi}
\end{equation}
Given $\hat\alpha$, we can construct idiosyncratic distribution estimates
\[
\hat b_i(\hat\alpha)=D_i(\hat\alpha)P_i-\hat\alpha\,q_i|_S
\]
and average them across $i$ to obtain a least-squares estimate
\[
\hat p_0^S = \frac{1}{1-\hat\alpha}\cdot\frac{1}{K}\sum_{i=1}^K \hat b_i(\hat\alpha)
\]
of the idiosyncratic distribution. If desired, $\hat p_0^S$ can be projected back onto the simplex $\Delta(S)$ to enforce exact nonnegativity and normalization.

A detailed econometric analysis of these estimators is beyond the scope of this paper, but the algebraic structure $\Delta=\alpha A$ provides a straightforward route to estimation and specification testing in finite samples.

\subsection{Identification: Numerical examples}
\label{subsec:id-numerical}

This subsection collects a few simple numerical examples that illustrate how
the identities in this section recover the influence parameter $\alpha$ and
the idiosyncratic distribution $p_0(\cdot\mid S)$ from observed exposure--choice
pairs.

\begin{example}[Recovering $\alpha$ from two exposures with equal $q_S$]
\label{ex:id-equal-qS}
Let $\mathcal{X}=\{a,b,c\}$ and $S=\{a,b\}$. The decision-maker's idiosyncratic preferences are $u(a)=3$, $u(b)=1$, so the idiosyncratic Luce rule on $S$ is
\[
p_0(a\mid S)=\frac{3}{4},\qquad
p_0(b\mid S)=\frac{1}{4}.
\]
Suppose the true influence parameter is $\alpha=\tfrac{2}{5}$ and consider
two exposures:
\[
q_1 = \Big(\tfrac{1}{5},\tfrac{3}{10},\tfrac{1}{2}\Big),\qquad
q_2 = \Big(\tfrac{2}{5},\tfrac{1}{10},\tfrac{1}{2}\Big),
\]
so that $q_{1,S}=q_{2,S}=\tfrac{1}{2}$. Under AWLM the choice probabilities are
\[
P_1 = \Big(\tfrac{53}{80},\tfrac{27}{80}\Big),
\qquad
P_2 = \Big(\tfrac{61}{80},\tfrac{19}{80}\Big).
\]

From the analyst's perspective, $(q_1,P_1)$ and $(q_2,P_2)$ are observed. We
compute the differences
\[
\Delta \equiv    P_2-P_1
=
\Big(\tfrac{1}{10},-\tfrac{1}{10}\Big),
\]
and, using the definition in Proposition~\ref{prop:two-exposure-identity},
\[
A \equiv    (q_2-q_1)\big|_S + (1-q_{2,S})P_2 - (1-q_{1,S})P_1
=
\Big(\tfrac{1}{4},-\tfrac{1}{4}\Big).
\]
The two-exposure identity $\Delta = \alpha A$ then implies
\[
\alpha
=
\frac{\Delta(a)}{A(a)}
=
\frac{\tfrac{1}{10}}{\tfrac{1}{4}}
=
\frac{2}{5}
=
\frac{\Delta(b)}{A(b)}.
\]
Thus $\alpha$ is exactly recovered from a single non-degenerate pair, and in
this particular design the proportional-difference condition from
Remark~\ref{rem:qS-equal} manifests as
\[
\frac{P_2(a)-P_1(a)}{q_2(a)-q_1(a)}
=
\frac{P_2(b)-P_1(b)}{q_2(b)-q_1(b)}
=
\frac{\alpha}{(1-\alpha)+\alpha q_S}
=
\frac{\tfrac{2}{5}}{(1-\tfrac{2}{5})+\tfrac{2}{5}\cdot\tfrac{1}{2}}.
\]
\end{example}

\begin{example}[Recovering $\alpha$ and the idiosyncratic distribution when $q_S$ changes]
\label{ex:id-different-qS}
We keep the same universe $\mathcal{X}=\{a,b,c\}$, feasible set $S=\{a,b\}$,
idiosyncratic utilities $u(a)=3$, $u(b)=1$, and influence parameter
$\alpha=\tfrac{2}{5}$ as in Example~\ref{ex:id-equal-qS}. Now consider two
different exposures with \emph{different} total mass on $S$:
\[
q_1 = \Big(\tfrac{1}{5},\tfrac{3}{10},\tfrac{1}{2}\Big),
\qquad
q_2 = \Big(\tfrac{3}{10},\tfrac{2}{5},\tfrac{3}{10}\Big),
\]
so that $q_{1,S}=\tfrac{1}{2}$ and $q_{2,S}=\tfrac{7}{10}$. Under AWLM we obtain
\[
P_1 = \Big(\tfrac{53}{80},\tfrac{27}{80}\Big),
\qquad
P_2 = \Big(\tfrac{57}{88},\tfrac{31}{88}\Big).
\]

From the analyst's point of view, $(q_1,P_1)$ and $(q_2,P_2)$ are again
observed. We compute
\[
\Delta \equiv    P_2-P_1
=
\Big(-\tfrac{13}{880},\ \tfrac{13}{880}\Big),
\]
and
\[
A \equiv    (q_2-q_1)\big|_S + (1-q_{2,S})P_2 - (1-q_{1,S})P_1
=
\Big(-\tfrac{13}{352},\ \tfrac{13}{352}\Big).
\]
The ratio $\Delta(x)/A(x)$ is constant across $x\in S$ and recovers $\alpha$:
\[
\frac{\Delta(a)}{A(a)}
=
\frac{-\tfrac{13}{880}}{-\tfrac{13}{352}}
=
\frac{2}{5}
=
\frac{\Delta(b)}{A(b)}.
\]

Given $\alpha=\tfrac{2}{5}$, we can also recover the idiosyncratic distribution
$p_0(\cdot\mid S)$. For exposure $q_1$, the normalization factor is
\[
D_1(\alpha) = (1-\alpha)+\alpha q_{1,S}
= \Big(1-\tfrac{2}{5}\Big) + \tfrac{2}{5}\cdot\tfrac{1}{2}
= \frac{4}{5}.
\]
Using $b_1(\alpha)\equiv   D_1(\alpha)P_1 - \alpha q_1|_S$ from
Proposition~\ref{prop:finite-rationalizability}, we obtain
\[
b_1(\alpha)
=
\Big(\tfrac{9}{20},\tfrac{3}{20}\Big).
\]
Repeating the calculation with $(q_2,P_2)$ yields the same vector
$b_2(\alpha)=b_1(\alpha)$, as guaranteed by the theory. Dividing by
$(1-\alpha)=\tfrac{3}{5}$ gives
\[
p_0(a\mid S) = \frac{b_1(a)}{1-\alpha}
=\frac{\tfrac{9}{20}}{\tfrac{3}{5}}
=\frac{3}{4},\qquad
p_0(b\mid S) = \frac{b_1(b)}{1-\alpha}
=\frac{\tfrac{3}{20}}{\tfrac{3}{5}}
=\frac{1}{4},
\]
which exactly recovers the idiosyncratic Luce rule implied by the true utilities
$u(a)=3$ and $u(b)=1$.
\end{example}

\begin{example}[A non-identifiable pair]
\label{ex:id-degenerate}
The identity $\Delta = \alpha A$ shows when a single pair of exposures is
\emph{not} informative about $\alpha$. Consider again a fixed $S$ and two
exposures $q_1,q_2$ such that
\[
q_1\big|_S = q_2\big|_S
\quad\text{and}\quad
q_{1,S} = q_{2,S}.
\]
Under intra-aspiration irrelevance, AWLM implies $P_1 = P_2$, so
\[
\Delta = P_2-P_1 = 0.
\]
At the same time, we have $(q_2-q_1)\big|_S=0$ and $q_{1,S}=q_{2,S}$, so the
definition of $A_{12}$ yields
\[
A = (q_2-q_1)\big|_S + (1-q_{2,S})P_2 - (1-q_{1,S})P_1 = 0.
\]
Thus both $\Delta$ and $A$ are identically zero, and any value of $\alpha$
satisfies $\Delta=\alpha A$ for this pair. In other words, an exposure pair
that does not change the influencer's behavior on $S$ (and keeps $q_S$
fixed) cannot by itself identify the strength of influence $\alpha$.
Identification requires at least one non-degenerate pair with $A\neq 0$, as
in Examples~\ref{ex:id-equal-qS} and~\ref{ex:id-different-qS}.
\end{example}

\section{Additional Figures \& Intuition}
\label{app:figures}
\begin{figure}[ht!]
  \centering
  \includegraphics[width=0.75\textwidth]{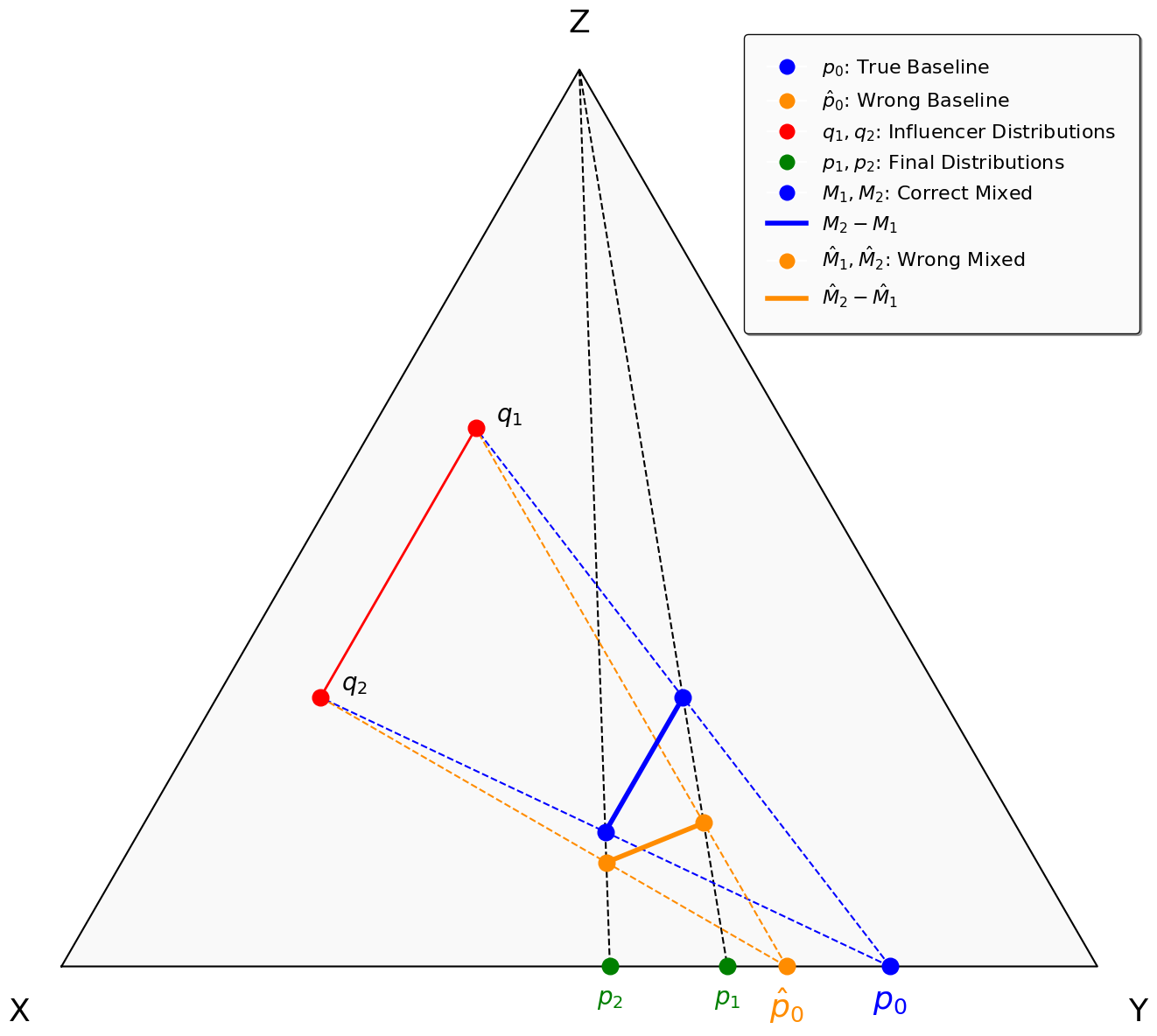}
  \caption{Analogous to Figure~\ref{fig:identification-2d}; comparison of identification geometry under correct and incorrect idiosyncratic choice distribution.}
  \label{fig:identification-3d}
\end{figure}

\begin{figure}[ht!]
  \centering
  \includegraphics[width=0.95\textwidth]{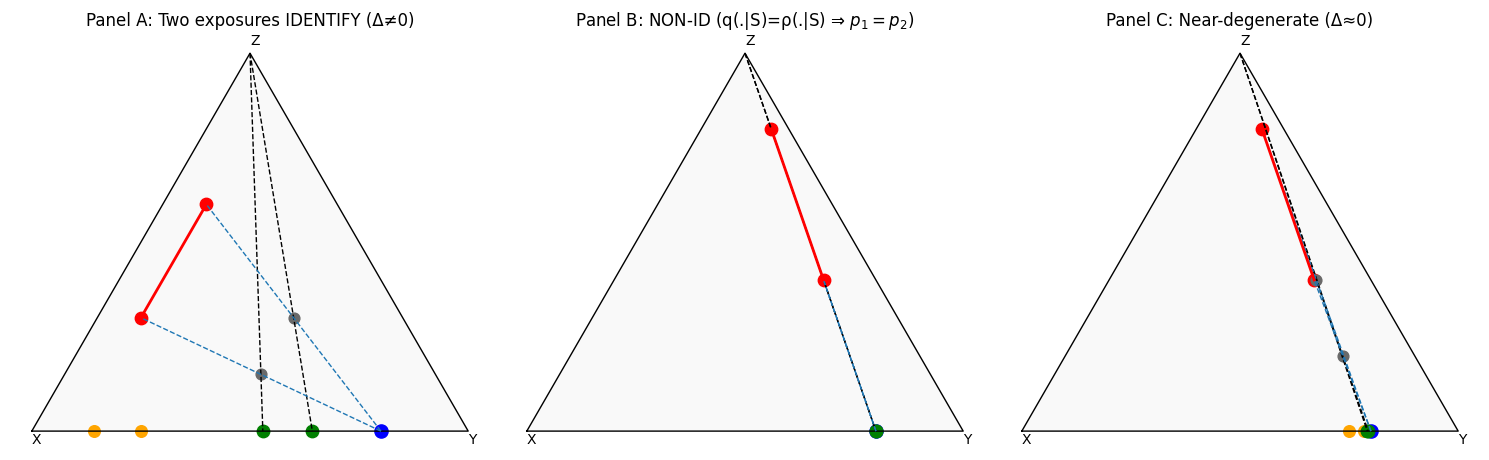}
  \caption{Identification versus non-identification. \text{Panel A:} Two exposures with sufficient variation identify $(\alpha, \rho)$; the observed choices (green points) differ, and the mixed chord is parallel to $q_2 - q_1$. \text{Panel B:} Non-identification occurs when both exposures have $q(\cdot|S) = p_0(\cdot|S)$, resulting in $p_1 = p_2$. \text{Panel C:} Near-degenerate design where exposures nearly satisfy the non-identification condition, leading to weak identification of $\alpha$.}
  \label{fig:degenerate}
\end{figure}

\begin{figure}[ht!]
  \centering
  \includegraphics[width=0.8\textwidth]{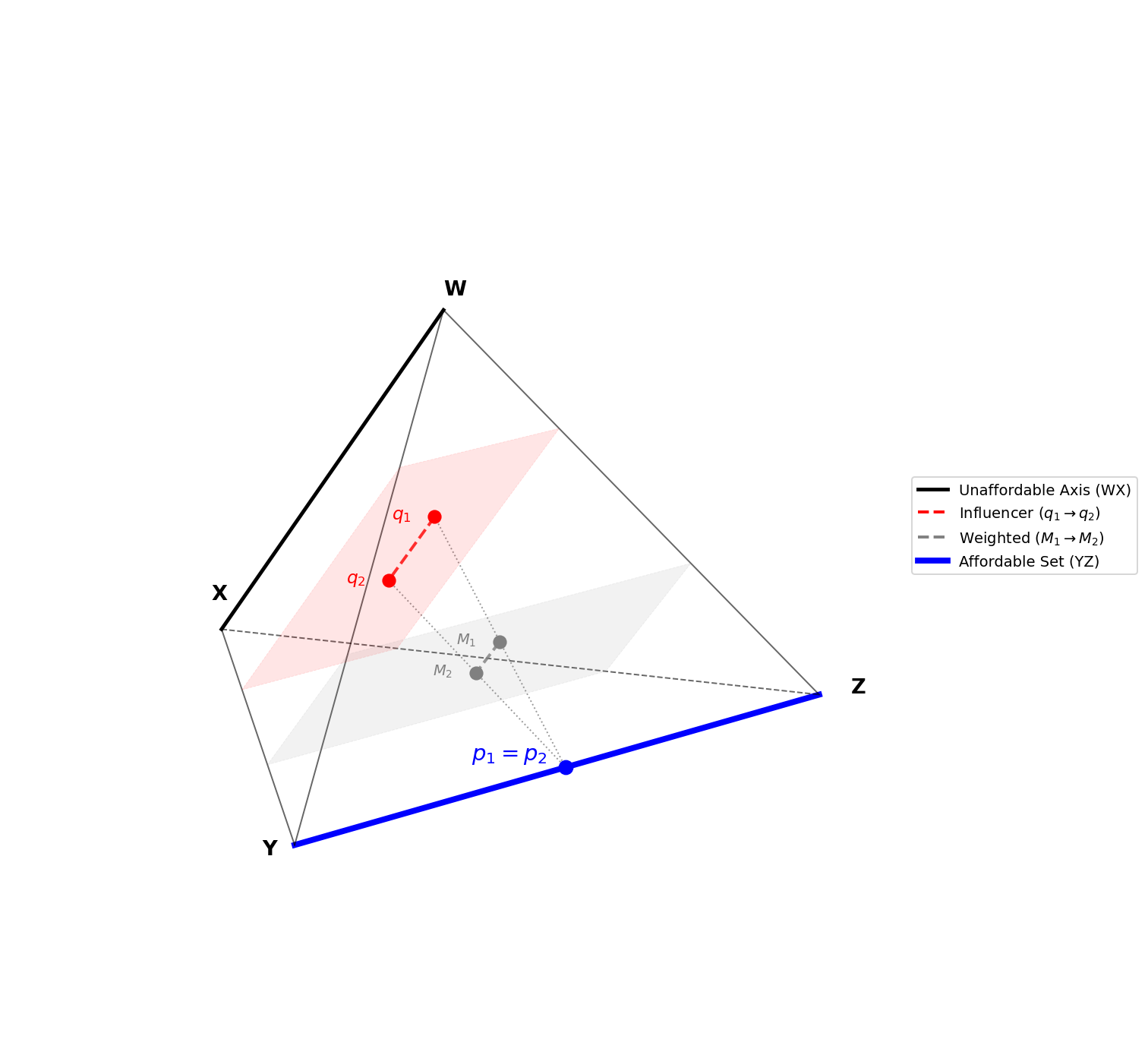}
  \caption{Intuition for Intra-Aspiration Invariance. The DM's choice on $S = \{y,z\}$ depends on the influencer's distribution $q$ only through its restriction $q|_S$. Reallocating probability mass among aspirational alternatives in $I \setminus S = \{x,w\}$ does not affect the final choice distribution on $S$, as long as $q|_S$ remains unchanged.}
  \label{fig:intra-aspiration}
\end{figure}

\begin{figure}[ht!]
  \centering
  \begin{subfigure}[b]{0.48\textwidth}
    \centering
    \includegraphics[width=\textwidth]{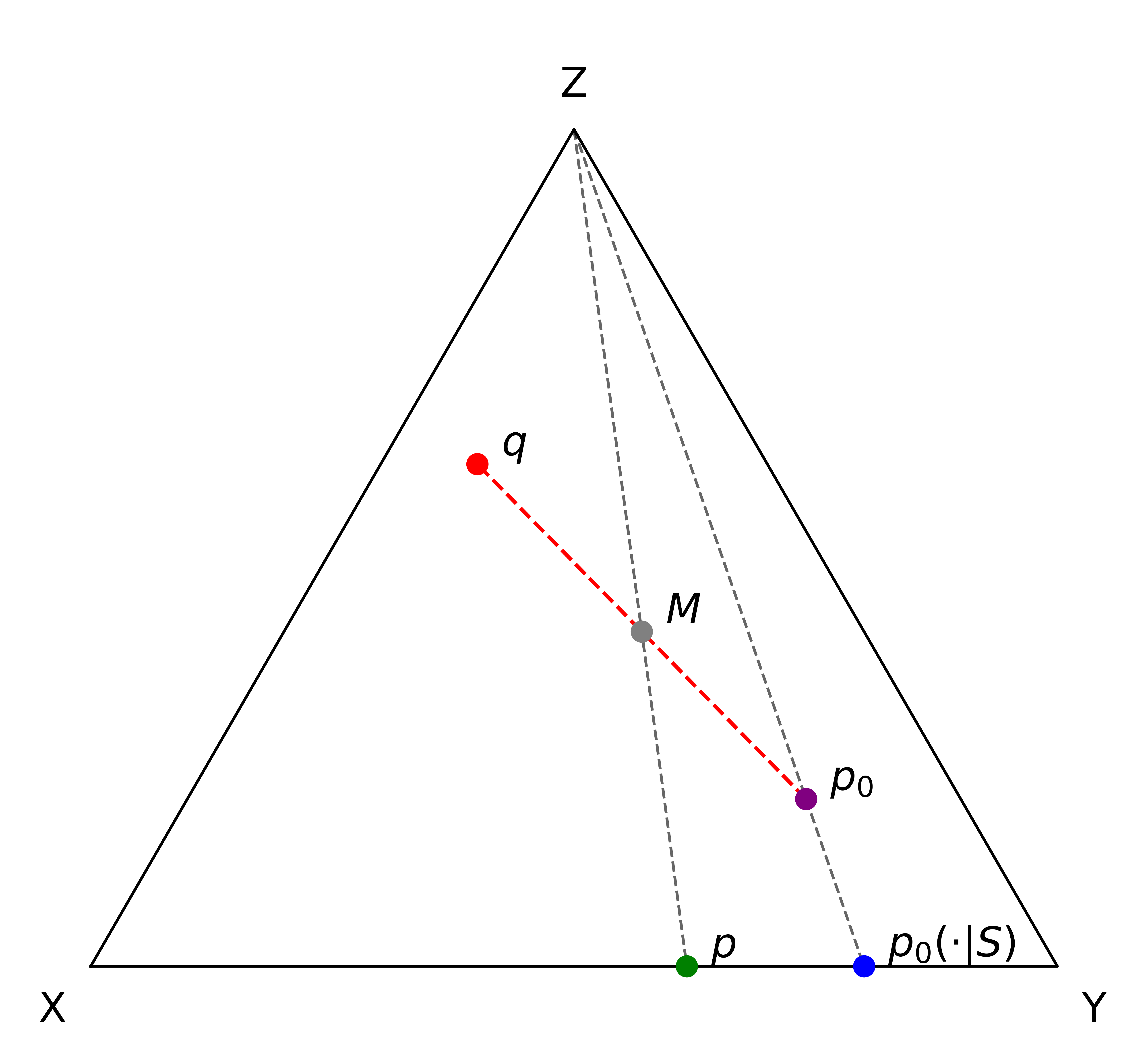}
    \caption{Mix first, then normalize}
    \label{fig:mix-first}
  \end{subfigure}
  \hfill
  \begin{subfigure}[b]{0.48\textwidth}
    \centering
    \includegraphics[width=\textwidth]{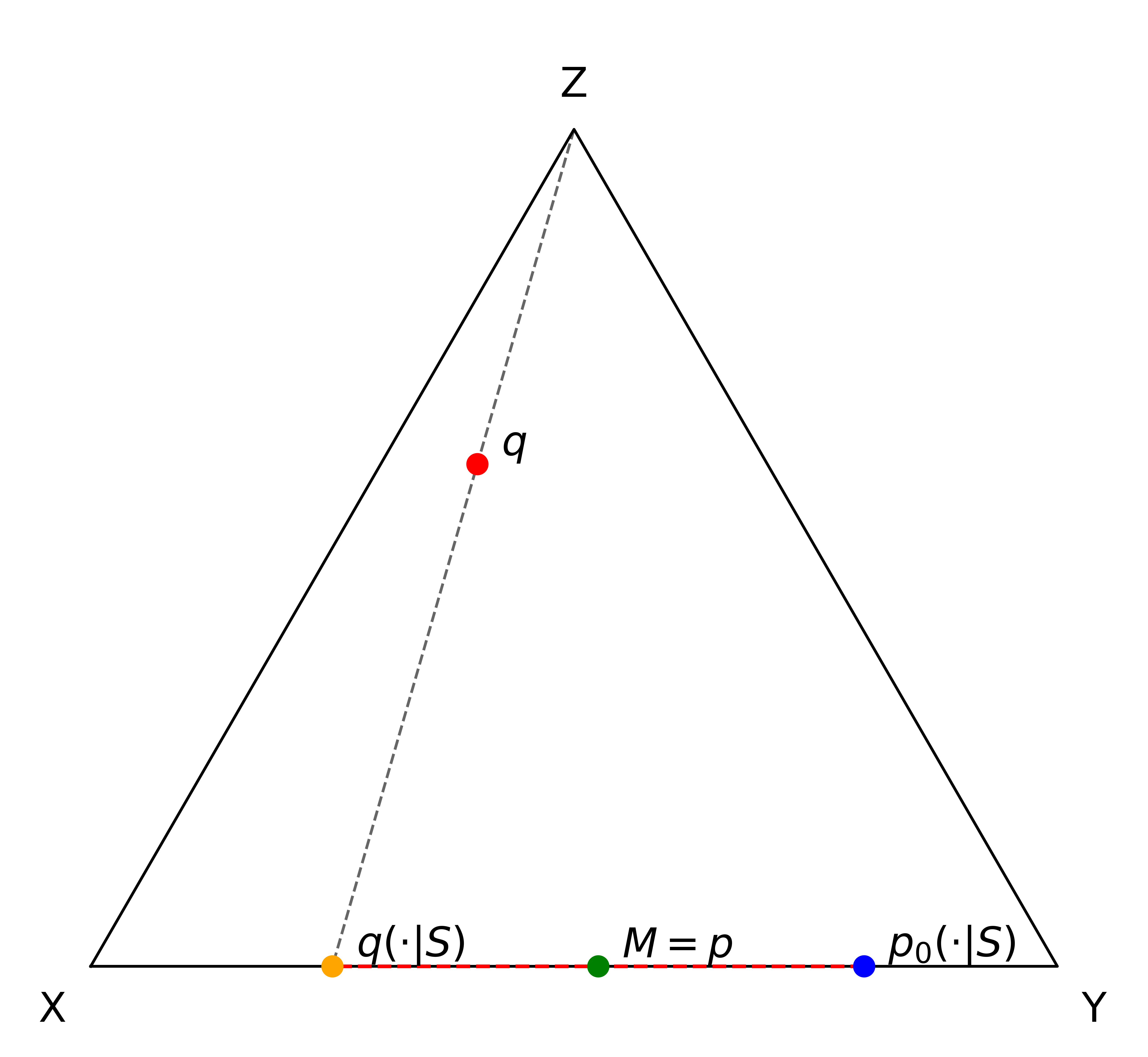}
    \caption{Project first, then mix}
    \label{fig:project-first}
  \end{subfigure}
  \caption{Alternative mixing orders. The AWLM adopts the ``mix first, then normalize'' approach (left panel): the DM's idiosyncratic distribution is first mixed with the influencer's full distribution on $I$, and then the result is normalized onto $S$. An alternative specification would first project the influencer's distribution onto $S$ and then mix (right panel). These yield different predictions when $q_S < 1$.}
  \label{fig:mixing-orders}
\end{figure}

\end{document}